\DeclareSymbolFont{tipa}{T3}{cmr}{m}{n}
\DeclareMathAccent{\invbreve}{\mathalpha}{tipa}{16}
\DeclareMathAlphabet\mathbfcal{OMS}{cmsy}{b}{n}
\definecolor{darkblue}{rgb}{0.0, 0.0, 0.55}
\definecolor{forestgreen}{rgb}{0.13, 0.55, 0.13}
\definecolor{darkteal}{rgb}{0.0, 0.24, 0.18}
\definecolor{w}{RGB}{231, 160, 76}
\definecolor{l}{RGB}{37, 84, 138}
\theoremstyle{definition}
\newtheorem{definition}{Definition}[section]
\newtheorem{theorem}{Theorem}[section]
\newtheorem{lemma}{Lemma}[section]
\newtheorem*{remark}{Remark}
\newtheorem*{example}{Example}
\DeclareMathOperator\supp{supp}
\title{Non-Cooperative Games with Uncertainty \\
        \large Definition, Existence and some properties \\ of the Extended Equilibrium }
\author{József Konczer  \\
        \href{mailto:konczer.j@gmail.com}{konczer.j@gmail.com},
        \href{https://konczer.github.io/}{konczer.github.io}
        }
\date{February 2025}
\begin{document}

\maketitle

\begin{abstract}
    This paper introduces a framework for finite non-cooperative games where each player faces a globally uncertain parameter with no common prior. Every player chooses both a mixed strategy and projects an emergent subjective prior to the uncertain parameters. We define an ``Extended Equilibrium'' by requiring that no player can improve her expected utility via a unilateral change of strategy, and the emergent subjective priors are such that they maximize the expected regret of the players. A fixed-point argument -- based on Brouwer's fixed point theorem and mimicking the construction of Nash -- ensures existence. Additionally, the ``No Fictional Faith'' theorem shows that any subjective equilibrium prior must stay non-concentrated if the parameter truly matters to a player. 
    This approach provides a framework that unifies regret-based statistical decision theory and game theory, yielding a tool for handling strategic decision-making in the presence of deeply uncertain parameters.
\end{abstract}

\section{Introduction}

In this paper, multiplayer non-cooperative games with uncertain parameters will be introduced. A so-called Extended Equilibrium concept will be defined, and the existence of such equilibria will be proved. Further, a general property of the equilibrium solution will be proved, ensuring that no unjustified certainty is emerging about the uncertain factors.

The construction of multiplayer games with uncertainty can be viewed as the generalization of the theoretical framework explored in \cite{arxiv:StatisticalGames}\footnote{For further resources see the project's \href{https://github.com/Konczer/UncertaintyTheory/tree/main/StatisticalGames}{GitHub repository}.}, which framed a single Agent's decision-making in the face of uncertainty as a game with a fictional player. 
This game theoretical framework of ``statistical decision-theory'' \cite{book:Wald,book:Savage,book:CoxStatistics} defines equilibrium strategies -- in imaginary games played with adversarial fictional players controlling the unknown parameters -- and interprets these strategies as decision-making heuristics for the Agent.

To embed decision-making problems into a game theoretical framework, it is crucial to appropriately choose the imagined utility function of the fictional player. There are multiple arguments supporting the choice of regret as the fictional player's utility function. (For historical context, see the following references \cite{book:Savage,paper:MinimaxRegretNiehans1948,book:CoxStatistics}; for further philosophical arguments, see \cite{essay:EssayOnUncertaintyKonczer2024}). 
In the context of Bayesian (or Harsányi-) games, a similar construction has been proposed in \cite{arxiv:Hyafil2012Regret}.
(In general, our construction can be viewed as a specific prior selection requirement for games with incomplete information with inconsistent beliefs \cite{book:GameTheory}. In this interpretation, the extended equilibrium is simply a Bayesian equilibrium with a set of emergent subjective priors \cite{book:GameTheory}, while the priors are determined by a minimax-regret principle.)

The existence proof of the extended equilibrium is mainly based on an elegant construction used by John Nash in his paper from 1951 \cite{paper:Nash1951}, directly utilizing the Brouwer fixed-point theorem.
The ``No Fictional Faith'' theorem is based on basic properties of the regret matrices, and the result can be viewed as a weakened regularity property of the solution. This condition is often a prescribed requirement for the prior in the context of Bayesian statistics, while in extended equilibrium -- its weaker form -- emerges naturally.

\section{An illustrating fictional story}

Although this section is optional and is not essential to understand the formal results,
it aims to motivate the framework and ground the equilibrium solution with a fictional story using a relatable semi-historical narrative:

\begin{example}[The story of two generals and the weather]
\label{story:GW}
Imagine two generals, a ``Defender'' and an ``Attacker''\footnote{To embed the story to a semi-historical context, one can picture the dilemmas of Scipio and Hannibal in the context of the Second Punic War in 3rd century BC, or alternatively, one can depict generals and emperors from the Tang-Tibetan Wars in the 7th century. In the first case, the Alps could provide the mountain scene, while in the second case, the Himalayas -- or more realistically, other mountains of Central Asia and the Tibetan Plateau -- could play this role.}. The ``Defender'' and the ``Attacker'' are preparing for battle 
near a mountainous region more than a thousand years ago -- in an era when technology for reliable weather forecasting was not accessible.
The possible battlegrounds are new for both generals, but scouts identified two directions for the attack -- an upper route in the hills and a lower route in a valley.
In the hills, the ever-changing weather also plays a crucial role. By a sudden storm, it can wipe out a whole attacking army.

Both generals can choose from two actions. The ``Defender'' can decide to defend the upper or the lower route, while the ``Attacker'' can decide to attack from the same two directions. The next day, the weather can be in two states; there can be a storm in the hills, or the weather can stay calm.

The winning and losing prospects depend on the actions of the generals and on the weather in the following way:
If the weather is calm, then the ``Defender'' loses if he is attacked from an undefended direction, but the defence is always successful if both generals choose the same direction.
In case of a storm, however, the attacking army would be lost if it dared the upper route.
(On the lower route, in the valley, the weather does not change the outcome of a battle.)

Assuming that no general has reliable knowledge about the weather or can reliably associate a probability distribution to it -- and their lack of information is common knowledge -- what strategy should the generals follow?

(The generals are capable of randomization -- methods such as divinations of various kinds were used from prehistory to make randomized actions \cite{paper:DivinationArticle}. A mixed strategy, therefore, will be interpreted as a probability distribution representing chances by which specific actions can be expected from the ``Defender'' and the ``Attacker''.)

\end{example}

The following constructions and the introduced equilibrium concepts aim to grasp strategic decision-making problems, such as the dilemma depicted in the previous story.

\section{Definitions and notation}

\subsection{Games with uncertainty}

In this section, finite non-cooperative games with globally uncertain factors will be formally defined.

\begin{definition}[Games with uncertainty]
\label{def:GameWithUncertainty}
A (finite, $N$-person, non-cooperative) game with (global) uncertainty is a tuple $\langle N, \underline{\mathcal{A}}, \Theta, \underline{U} \rangle$, where:

\begin{itemize}
    \item $N \in \mathbb{N}$ is the number of players;
    \item $\underline{\mathcal{A}} = (\mathcal{A}_1,\dots,\mathcal{A}_N)$ is the action set profile of the players, where all $\mathcal{A}_i$ are finite sets;
    \item $\Theta$ is a finite set of the globally unknown parameter or state. (Global uncertainty means that no player has credible side-information about it's value, and this is common knowledge);
    \item $\underline{U} = (U_1,\dots,U_N)$ is the set of subjective, action and parameter-dependent utility functions taking real values: $U_i : \underline{\mathcal{A}} \times \Theta \mapsto \mathbb{R}$.
\end{itemize}
    
\end{definition}

A few remarks can be made comparing the concept with previous classical game constructions appearing in the literature.

\begin{remark}
    Superficially, this might appear as an game with $N+1$ players, where the extra $N+1$-th player is associated with $\Theta$ as her action set. However, the above-defined game with uncertainty does not associate utilities with the uncertain parameter, i.e. only the $N$ real players have specified utilities. This crucial difference prevents us from simply understanding games with uncertainty as multiplayer games with one extra player.

\end{remark}

\begin{remark}
    Another remarkable feature of this construction is that it does not require a commonly known probability distribution (or prior) $\pi$ on the unknown parameters. Therefore, the construction cannot be understood simply as a Bayesian (or Harsányi) game \cite{book:GameTheory} either.
    
\end{remark}

\begin{remark}
    In this construction we do not specify a set of subjective prior believes either, therefore it can not be interpreted simply as a game with incomplete information with inconsistent beliefs \cite{book:GameTheory} -- in which a set of subjective priors $\underline{\pi}$ should be specified for the uncertain set of parameters.
    
\end{remark}

\subsubsection{Concrete example}
\label{sec:ExampleDef}

To give a concrete example, let us see how the story of the ``Generals and the weather'' -- presented in \autoref{story:GW}  -- could be formally described as a game with uncertainty:

\begin{example}[Generals and the weather]
\label{ex:DefGeneralsAndTheWeather}
The ``Generals and the weather'' game, $\mathcal{GW}$ can be formalized as a two-player game with uncertainty:
$\mathcal{GW} = \langle N = 2 , \underline{\mathcal{A}}, \Theta, \underline{U} \rangle$:

\begin{itemize}
    \item The two players are:
    \begin{itemize}
        \item Player 1 being the ``Defender'';
        \item Player 2 being the ``Attacker''.
    \end{itemize}
    \item $\underline{\mathcal{A}} = (\mathcal{A}_1,\mathcal{A}_2)$, where:
    \begin{itemize}
        \item $\mathcal{A}_1 = \{\text{``Up''},\text{``Down''}\}$ the possible actions of Player 1, the ``Defender'': choosing to go up to defend the upper route or go down to defend the lower route;
        \item $\mathcal{A}_2 = \{\text{``Up''},\text{``Down''}\}$ the possible actions of Player 2, the ``Attacker'': choosing to go up to attack the upper route or go down to attack the lower route.
    \end{itemize}
    \item $\Theta = \{\text{``Calm''}, \text{``Storm''}\}$ is the possible states of the globally unknown weather;
    \item $\underline{U} = (U_1, U_2)$ is the set of subjective, action and parameter-dependent utility functions taking real values: $U_i : \underline{\mathcal{A}} \times \Theta \mapsto \mathbb{R}$.
    (Here, we make the choice that winning and losing will have $1$ and $0$ utility respectively, i.e. the utility represents the chance of winning in a given scenario.)
    \begin{itemize}
        \item The utility function of Player 1, the ``Defender'' (captured in a matrix form where rows and columns are representing the first and second player's actions respectively), looks the following: 
        \[
        U_1(.,.;\text{``Calm''}) =
        \begin{bmatrix}
            1 & 0 \\
            0 & 1
        \end{bmatrix}, \quad
        U_1(.,.;\text{``Storm''}) =
        \begin{bmatrix}
            1 & 0 \\
            1 & 1
        \end{bmatrix}
        \]
        \item Using the same convention, the  utility function of Player 2, the ``Attacker'', has the following form: 
        \[
        U_2(.,.;\text{``Calm''}) =
        \begin{bmatrix}
            0 & 1 \\
            1 & 0
        \end{bmatrix}, \quad
        U_2(.,.;\text{``Storm''}) =
        \begin{bmatrix}
            0 & 1 \\
            0 & 0
        \end{bmatrix}
        \]
    \end{itemize}
\end{itemize}

\end{example}

\subsection{Notation}

To define further concepts and the extended equilibrium, it is useful to introduce and clarify some notation: 

\begin{definition}[Value assignment / substitution]
The notation:
\begin{equation}
    \chi^{\leftarrow \theta} := \theta
\end{equation}
means that a variable $\chi$ is replaced by $\theta$. 

For an $N$-tuple $\underline{b} = (b_1,\dots,b_N)$ and an index $i \in \{1,\dots,N\}$, the notation
    \begin{equation}
        \underline{b}^{[i] \leftarrow a_i} 
        :=
        (b_1,\dots,b_i^{\leftarrow a_i},\dots,b_N)
        =
        (b_1,\dots,a_i,\dots,b_N)
    \end{equation}
    
denotes the tuple obtained by replacing only the $i$th entry $b_i$ of $\underline{b}$ with $a_i$, while leaving all other entries unchanged\footnote{Naturally, for $i=1$: $\underline{b}^{[1] \leftarrow a_1} := (a_1,\dots,b_N)$ and for $i=N$: $\underline{b}^{[N] \leftarrow a_N} := (b_1,\dots,a_N)$.}.

\end{definition}

\begin{definition}[Mixed strategies and subjective priors]
We will denote the $i$th player's (potentially mixed) strategy by $\sigma_i$, which is a probability distribution on the finite set $\mathcal{A}_i$ and the $i$th player's fictional or subjective prior by $\pi_i$ which is a probability distribution on the finite set $\Theta$. Formally:

    \begin{equation}
        \sigma_i \in \Delta(\mathcal{A}_i), \quad \pi_i \in \Delta(\Theta)
    \end{equation}
\end{definition}

Where for a finite set $S$ we denote the set of probability distributions on it by $\Delta(S)$:

\begin{equation}
    \Delta(S) := \left \{p: S \mapsto [0,1] \ \bigg | \ \sum_{s \in S} p(s) = 1 \right \}
\end{equation}

\begin{definition}[Complete strategy profile]
A complete strategy profile $\underline{\rho}$ contains both the real strategies of the $N$ real players, $\underline{\sigma} = (\sigma_1,\dots,\sigma_N) \in \mathcal{S}$ and the fictional/imaginary strategies or subjective priors of the players, projected to the uncertain parameters $\underline{\pi}=(\pi_1,\dots,\pi_N) \in \mathcal{P}$:

    \begin{equation}
        \underline{\rho} = ( \underline{\sigma}; \underline{\pi} )
    \end{equation}

The set of complete strategy profiles $\mathcal{R}$ can be constructed as the Descartes product of probability spaces:
    
    \begin{equation}
        \underline{\rho} \in 
        \mathcal{R} :=
        \mathcal{S} \times \mathcal{P} =
        \left ( \bigtimes_{i=1}^N \Delta(\mathcal{A}_i) \right ) \times
        \left ( \Delta(\Theta) \right )^N
    \end{equation}

\end{definition}

\begin{definition}[Product probabilities]
To shorten the definitions, the following notation for the -- partially subjective -- probability of an action profile and a parameter value for the $i$th player, given a complete strategy profile, can be introduced: 

    \begin{equation}
        \Pi_i(\underline{a};\theta | \underline{\rho}) =
        \Pi_i(\underline{a};\theta | \underline{\sigma};\underline{\pi}) := 
        \left ( \Pi_{j=1}^N \sigma_j(a_j) \right ) \ \pi_i(\theta)
    \end{equation}

And the probability of an action profile given a -- real -- strategy profile:

    \begin{equation}
        \Pi(\underline{a}|\underline{\sigma}) := \Pi_{j=1}^N \sigma_j(a_j)
    \end{equation}

\end{definition}

\subsection{Regret and Expected quantities}

A central concept in the definition of the extended equilibrium for games with uncertainty is the personal regret of a player. This quantity depends on the player, her action, the value of the unknown parameter and the ``background'' that other players create by their probabilistic strategies.
Formally, we define the quantity in the following way:

\begin{definition}[Personal regret]

\begin{equation}
    R_i(\underline{a};\theta | \underline{\sigma}) = 
    \max_{c_i \in \mathcal{A}_i} 
    \left ( 
    \sum_{\underline{b} \in \underline{\mathcal{A}}} U_i(\underline{b}^{[i] \leftarrow c_i};\theta) \ \Pi(\underline{b}|\underline{\sigma})
    \right )
    -U_i(\underline{a};\theta)
\end{equation}

\end{definition}

\begin{remark}[Harsh regret]
There would be another way to introduce a similar concept. We will call it the harsh version of regret, where we do not take the maximum of the average but the average of the maximum, i.e. 
$R^H_i(\underline{a};\theta | \underline{\sigma}) = 
    \sum_{\underline{b}}
    \left ( 
    \max_{c_i} 
     U_i(\underline{b}^{[i] \leftarrow c_i};\theta) 
    \right )
    \ \Pi(\underline{b}|\underline{\sigma})
    -U_i(\underline{a};\theta)$.
However, a framework for the uncertain parameters built on this harsher definition of regret fails to fulfil crucial consistency requirements in the context of nested decision-making dilemmas. (For details, see construction in \cite{essay:EssayOnUncertaintyKonczer2024}.)

\end{remark}

\begin{definition}[Expected Utilities and Expected Regrets]
\label{def:EUER}

We can define the following -- partially subjective -- expected utilities and regrets, assuming a complete strategy profile $\underline{\rho} = ( \underline{\sigma}; \underline{\pi} )$:

    \begin{equation}
        \mathrm{EU_i}(\underline{\sigma};\underline{\pi}) = 
        \sum_{\underline{a} \in \underline{\mathcal{A}};\theta \in \Theta}
        U_i(\underline{a};\theta) \ 
        \Pi_i(\underline{a};\theta|\underline{\sigma};\underline{\pi})
    \end{equation}

    \begin{equation}
        \mathrm{ER_i}(\underline{\sigma};\underline{\pi}) = 
         \sum_{\underline{a} \in \underline{\mathcal{A}};\theta \in \Theta}
        R_i(\underline{a};\theta|\underline{\sigma}) \ 
        \Pi_i(\underline{a};\theta|\underline{\sigma};\underline{\pi})
    \end{equation}

Similarly we can introduce -- partially subjective -- expected utilities, given that the $i$th player is making action $a_i \in \mathcal{A}_i$  (assuming the same complete strategy profile):

    \begin{equation}
        \mathrm{EU_i}(a_i|\underline{\sigma};\underline{\pi}) = 
        \sum_{\underline{b} \in \underline{\mathcal{A}};\theta \in \Theta}
        U_i(\underline{b}^{[i] \leftarrow a_i};\theta) \ 
        \Pi_i(\underline{b};\theta|\underline{\sigma};\underline{\pi})
    \end{equation}

and the expected regrets, given that the unknown parameter is set to $\theta \in \Theta$:

    \begin{equation}
    \label{defeq:ER_from_Regret}
        \mathrm{ER}_i(\theta | \underline{\sigma};\underline{\pi}) = 
        \sum_{\underline{a} \in \underline{\mathcal{A}};\chi \in \Theta}
        R_i(\underline{a};\chi^{\leftarrow \theta} |\underline{\sigma}) \ 
        \Pi_i(\underline{a};\chi|\underline{\sigma};\underline{\pi})
    \end{equation}

Which -- because $\sum_\chi \pi_i(\chi) = 1$ -- simplifies to:

    \begin{equation}
        \mathrm{ER}_i(\theta | \underline{\sigma};\underline{\pi}) = 
        \sum_{\underline{a} \in \underline{\mathcal{A}}}
        R_i(\underline{a};\theta|\underline{\sigma}) \ 
        \Pi(\underline{a}|\underline{\sigma})
    \end{equation}
    
\end{definition}

\begin{remark}
\label{altdef:EUER}
    For an alternative definition of the expected regret, one can introduce action and state-dependent expected or effective utility matrices for each player:

    \begin{equation}
    \label{defeq:EUMatrix}
        \mathrm{EU}_i(a_i;\theta | \underline{\sigma}) = 
        \sum_{\underline{b} \in \underline{\mathcal{A}}} U_i(\underline{b}^{[i] \leftarrow a_i};\theta) \ \Pi(\underline{b}|\underline{\sigma})
    \end{equation}

    ($\sum_{b_i} \sigma_i(b_i) = 1$, therefore this expected utility matrix depends only on the strategies of other, $j \ne i$ players. (Conventionally denoted as $\sigma_{-i}$ \cite{book:GameTheory}.))

    From this expected or effective utility matrix, we get the expected utility for a certain action by taking the weighted sum with respect to the subjective prior $\pi_i$:
    
    \begin{equation}
        \mathrm{EU}_i(a_i | \underline{\sigma};\underline{\pi}) = 
        \sum_{\theta \in \Theta} \mathrm{EU}_i(a_i;\theta | \underline{\sigma}) \ \pi_i(\theta)
    \end{equation}

    From the expected or effective utility matrix, we can introduce an effective regret matrix:

    \begin{equation}
    \label{eq:RegretMatrix}
        \mathrm{ER}_i(a_i;\theta | \underline{\sigma}) =
        \max_{c_i \in \mathcal{A}_i} \left ( \mathrm{EU}_i(c_i;\theta | \underline{\sigma}) \right ) - \mathrm{EU}_i(a_i;\theta | \underline{\sigma})
    \end{equation}

    One can define the expected regret of the $i$th player for a given value of the unknown parameter $\theta$ as:

    \begin{equation}
    \label{eq:ERthetaERMatrix}
        \mathrm{ER}_i(\theta | \underline{\sigma}) = 
        \sum_{a_i \in \mathcal{A}_i} \mathrm{ER}_i(a_i;\theta | \underline{\sigma}) \ \sigma_i(a_i)
    \end{equation}

    Finally, the expected regret for the $i$th player is:

    \begin{equation}
        \mathrm{ER}_i(\underline{\sigma};\underline{\pi}) =
        \sum_{\theta \in \Theta} \mathrm{ER}_i(\theta | \underline{\sigma}) \ \pi_i(\theta)
    \end{equation}
    
\end{remark}

For later use, we can spell out four simple lemmas about the expected quantities and the regret matrix.

\begin{lemma}
    The matching expected quantities defined in Def. \autoref{def:EUER} and in Remark \ref{altdef:EUER} are equivalent.
    
\end{lemma}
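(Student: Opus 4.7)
The equivalence splits into three matching pieces: the conditional expected utility $\mathrm{EU}_i(a_i\mid\underline{\sigma};\underline{\pi})$, the parameter-conditional expected regret $\mathrm{ER}_i(\theta\mid\underline{\sigma};\underline{\pi})$, and the total expected regret $\mathrm{ER}_i(\underline{\sigma};\underline{\pi})$. My plan is to handle each by peeling off the product structure of $\Pi_i$ and rearranging sums, so the whole proof is a chain of elementary identities with no fixed-point or topological content.

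First, I would record the basic factorisation $\Pi_i(\underline{a};\theta\mid\underline{\sigma};\underline{\pi})=\Pi(\underline{a}\mid\underline{\sigma})\,\pi_i(\theta)$ together with the two normalisation identities $\sum_{b_i}\sigma_i(b_i)=1$ and $\sum_\theta\pi_i(\theta)=1$. The conditional expected utility is then immediate: in the Def.~\ref{def:EUER} formula for $\mathrm{EU}_i(a_i\mid\underline{\sigma};\underline{\pi})$, I swap the $\theta$-sum outside, pull out $\pi_i(\theta)$, and recognise the remaining $\underline{b}$-sum as exactly the effective utility matrix $\mathrm{EU}_i(a_i;\theta\mid\underline{\sigma})$ of \eqref{defeq:EUMatrix}, which matches the formula from the Remark.

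The core step is the parameter-conditional regret, i.e.\ showing that the simplified form
\[
\mathrm{ER}_i(\theta\mid\underline{\sigma};\underline{\pi})
=\sum_{\underline{a}}R_i(\underline{a};\theta\mid\underline{\sigma})\,\Pi(\underline{a}\mid\underline{\sigma})
\]
(already derived in the text from \eqref{defeq:ER_from_Regret}) equals $\sum_{a_i}\mathrm{ER}_i(a_i;\theta\mid\underline{\sigma})\,\sigma_i(a_i)$ of \eqref{eq:ERthetaERMatrix}. Here I would substitute the definition of $R_i$, noting that the $\max_{c_i}\sum_{\underline{b}}U_i(\underline{b}^{[i]\leftarrow c_i};\theta)\Pi(\underline{b}\mid\underline{\sigma})$ term is $\max_{c_i}\mathrm{EU}_i(c_i;\theta\mid\underline{\sigma})$ and is constant in $\underline{a}$, so it detaches from the $\underline{a}$-sum (using $\sum_{\underline{a}}\Pi(\underline{a}\mid\underline{\sigma})=1$). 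For the remaining $\sum_{\underline{a}}U_i(\underline{a};\theta)\Pi(\underline{a}\mid\underline{\sigma})$ I would split $\Pi(\underline{a}\mid\underline{\sigma})=\sigma_i(a_i)\prod_{j\ne i}\sigma_j(a_j)$ and reindex via $\underline{b}^{[i]\leftarrow a_i}$ so as to expose the factor $\mathrm{EU}_i(a_i;\theta\mid\underline{\sigma})$, after which the expression becomes $\sum_{a_i}\sigma_i(a_i)\bigl(\max_{c_i}\mathrm{EU}_i(c_i;\theta\mid\underline{\sigma})-\mathrm{EU}_i(a_i;\theta\mid\underline{\sigma})\bigr)$, exactly the Remark's expression.

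Once the parameter-conditional version is established, the total expected regret follows by averaging over $\theta$ with weight $\pi_i(\theta)$, and the matching claim for the unconditional expected utility $\mathrm{EU}_i(\underline{\sigma};\underline{\pi})$ is obtained by summing the conditional-utility identity against $\sigma_i(a_i)$. The only place requiring care is the reindexing in the $U_i$-term of the regret computation, since the ``max'' convention requires the effective utility matrix to be built with the same substitution $\underline{b}^{[i]\leftarrow c_i}$ that appears inside $R_i$; apart from this bookkeeping there is no genuine obstacle.
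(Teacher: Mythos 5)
Your proposal is correct: the paper states this lemma without proof (treating it as an immediate bookkeeping consequence of the definitions), and your chain of identities -- factorising $\Pi_i(\underline{a};\theta\mid\underline{\sigma};\underline{\pi})=\Pi(\underline{a}\mid\underline{\sigma})\,\pi_i(\theta)$, detaching the $\underline{a}$-independent $\max_{c_i}\mathrm{EU}_i(c_i;\theta\mid\underline{\sigma})$ term via $\sum_{\underline{a}}\Pi(\underline{a}\mid\underline{\sigma})=1$, and reindexing with $\sum_{b_i}\sigma_i(b_i)=1$ -- is exactly the intended verification. Nothing is missing; you have simply written out the routine computation the paper leaves implicit.
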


\begin{lemma}[Non-negativity of the effective regret matrix]
\label{lemma:NonNegativeRegret}
    The effective regret matrix defined in \eqref{eq:RegretMatrix} is always non-negative:
    \begin{equation}
        \forall i, \ \forall \underline{\sigma} \in \mathcal{S}, \ \forall a_i \in \mathcal{A}_i, \ \forall \theta \in \Theta \quad
        \mathrm{ER}_i(a_i;\theta | \underline{\sigma}) \ge 0
    \end{equation}
    
\end{lemma}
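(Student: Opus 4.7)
The plan is straightforward: the statement is an immediate consequence of the definition of the maximum, and there is no real obstacle. Since $\mathrm{ER}_i(a_i;\theta \mid \underline{\sigma})$ is defined in equation \eqref{eq:RegretMatrix} as the difference between the maximum of $\mathrm{EU}_i(c_i;\theta \mid \underline{\sigma})$ over $c_i \in \mathcal{A}_i$ and the particular value $\mathrm{EU}_i(a_i;\theta \mid \underline{\sigma})$, non-negativity reduces to the observation that a maximum over a set dominates any single element of that set.

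Concretely, I would fix arbitrary $i \in \{1,\dots,N\}$, $\underline{\sigma} \in \mathcal{S}$, $a_i \in \mathcal{A}_i$ and $\theta \in \Theta$, and then note that $a_i$ is itself an admissible candidate in the maximization defining the first term of \eqref{eq:RegretMatrix}. Hence
\begin{equation*}
    \max_{c_i \in \mathcal{A}_i} \mathrm{EU}_i(c_i;\theta \mid \underline{\sigma}) \ \ge\ \mathrm{EU}_i(a_i;\theta \mid \underline{\sigma}),
\end{equation*}
and subtracting $\mathrm{EU}_i(a_i;\theta \mid \underline{\sigma})$ from both sides gives $\mathrm{ER}_i(a_i;\theta \mid \underline{\sigma}) \ge 0$. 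Note that this argument uses only finiteness of $\mathcal{A}_i$ (to ensure the maximum is attained and the set is nonempty, since each $\mathcal{A}_i$ is a finite action set per Definition \ref{def:GameWithUncertainty}); no property of the underlying utility functions $U_i$ or the strategy profile $\underline{\sigma}$ is required.

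The only ``obstacle'' worth mentioning is a pedantic one: one must be sure that $\mathcal{A}_i \neq \emptyset$, which is implicit in the game definition (otherwise player $i$ would have no strategy at all and the expected utilities in \eqref{defeq:EUMatrix} would be ill-defined). Given this, the proof is a single line, and I would present it as such without further decoration.
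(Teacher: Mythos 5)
Your proof is correct and is exactly the argument the paper implicitly relies on: the lemma is stated without a written proof because it follows immediately from \eqref{eq:RegretMatrix}, since $a_i$ is itself one of the candidates $c_i$ in the maximization, so the maximum dominates $\mathrm{EU}_i(a_i;\theta\mid\underline{\sigma})$. Your added remark on finiteness and non-emptiness of $\mathcal{A}_i$ is a harmless bit of extra care; nothing further is needed.
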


\begin{lemma}[Existence of a regret-less action]
\label{lemma:ExistenceBestResponse}
    Further, because all action sets are finite, for any parameter value $\theta \in \Theta$, there is at least one action $a^o$ which yields zero effective regret (as defined in \eqref{eq:RegretMatrix}):

    \begin{equation}
        \forall i, \ \forall \underline{\sigma} \in \mathcal{S}, \ \forall \theta \in \Theta, \ \exists a^o \in \mathcal{A}_i \quad
        \mathrm{ER}_i(a^o;\theta | \underline{\sigma}) = 0
    \end{equation}
    
\end{lemma}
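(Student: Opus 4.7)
The plan is to exploit the finiteness of $\mathcal{A}_i$ together with the very definition of the effective regret matrix in \eqref{eq:RegretMatrix}. Fix an arbitrary player $i$, an arbitrary strategy profile $\underline{\sigma} \in \mathcal{S}$, and an arbitrary parameter value $\theta \in \Theta$. By Definition \ref{def:GameWithUncertainty}, the action set $\mathcal{A}_i$ is finite, so the collection of real numbers $\{\mathrm{EU}_i(c_i;\theta|\underline{\sigma}) : c_i \in \mathcal{A}_i\}$ is a finite subset of $\mathbb{R}$ and therefore attains its maximum at some element.

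Concretely, I would let
\begin{equation}
    a^o \in \arg\max_{c_i \in \mathcal{A}_i} \mathrm{EU}_i(c_i;\theta | \underline{\sigma}),
\end{equation}
which is a nonempty set precisely because $\mathcal{A}_i$ is finite. Substituting $a_i = a^o$ into the definition \eqref{eq:RegretMatrix} of the effective regret matrix then gives
\begin{equation}
    \mathrm{ER}_i(a^o;\theta | \underline{\sigma}) = \max_{c_i \in \mathcal{A}_i}\left( \mathrm{EU}_i(c_i;\theta | \underline{\sigma}) \right) - \mathrm{EU}_i(a^o;\theta | \underline{\sigma}) = 0,
\end{equation}
since $\mathrm{EU}_i(a^o;\theta|\underline{\sigma})$ equals the maximum by construction.

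Since $i$, $\underline{\sigma}$, and $\theta$ were arbitrary, the universal statement follows. There is no real obstacle here: the claim is essentially the tautology that a maximum over a finite set is attained, combined with the fact that the regret matrix is built by subtracting each utility from that maximum. The only thing worth being explicit about is invoking finiteness of $\mathcal{A}_i$ (guaranteed by Definition \ref{def:GameWithUncertainty}), which ensures the $\arg\max$ is nonempty without any continuity or compactness argument. Together with Lemma \ref{lemma:NonNegativeRegret}, this lemma says the regret matrix, viewed column by column in $\theta$, is nonnegative and has at least one zero entry per column --- a fact that will presumably be useful later for the ``No Fictional Faith'' argument.
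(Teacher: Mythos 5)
Your proof is correct and is exactly the argument the paper intends (the paper leaves this lemma unproved as an immediate consequence of the definition): finiteness of $\mathcal{A}_i$ makes the maximum in \eqref{eq:RegretMatrix} attained at some $a^o$, and substituting that maximizer yields zero regret. No gaps; nothing further is needed.
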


\begin{lemma}
\label{lemma:EUER}
     Following from the previous definitions and the concept of the support of a probability distribution, we get:
     
    \begin{equation}
        \mathrm{EU_i}(\underline{\sigma};\underline{\pi})
        =
        \sum_{a_i \in \mathcal{A}_i} \mathrm{EU_i}(a_i|\underline{\sigma};\underline{\pi}) \ \sigma_i(a_i)
        =
        \sum_{a_i \in \supp(\sigma_i)} \mathrm{EU_i}(a_i|\underline{\sigma};\underline{\pi}) \ \sigma_i(a_i)
    \end{equation}

    \begin{equation}
        \mathrm{ER_i}(\underline{\sigma};\underline{\pi})
        =
        \sum_{\theta} \mathrm{ER}_i(\theta | \underline{\sigma};\underline{\pi}) \ \pi_i(\theta)
        =
        \sum_{\theta \in \supp(\pi_i)} \mathrm{ER}_i(\theta | \underline{\sigma};\underline{\pi}) \ \pi_i(\theta)
    \end{equation}
    
\end{lemma}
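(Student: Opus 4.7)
The plan is to handle each identity separately. In each case the middle expression arises from a straightforward marginalization over the coordinates controlled by the $i$-th player (for the utility) or by the subjective prior (for the regret); the restriction to the support is then an immediate consequence of the fact that terms carrying weight zero contribute nothing.

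For the expected utility identity I would start from the definition $\mathrm{EU_i}(\underline{\sigma};\underline{\pi}) = \sum_{\underline{a},\theta} U_i(\underline{a};\theta)\,\Pi_i(\underline{a};\theta|\underline{\sigma};\underline{\pi})$ and split the sum over the full action profile $\underline{a}\in\underline{\mathcal{A}}$ into an outer sum over $a_i\in\mathcal{A}_i$ and an inner sum over the remaining coordinates. Because $\Pi_i$ factors as $\sigma_i(a_i)\cdot\bigl(\prod_{j\ne i}\sigma_j(a_j)\bigr)\cdot\pi_i(\theta)$, the factor $\sigma_i(a_i)$ pulls out of the inner sum. Relabelling the remaining action index as $\underline{b}$ and noting that its $i$-th coordinate is overwritten (so $U_i(\underline{a};\theta) = U_i(\underline{b}^{[i]\leftarrow a_i};\theta)$), the inner sum matches the definition of $\mathrm{EU}_i(a_i|\underline{\sigma};\underline{\pi})$ coming from \eqref{defeq:EUMatrix}, giving the middle expression. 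The rightmost equality holds because every $a_i\notin\supp(\sigma_i)$ contributes the factor $\sigma_i(a_i)=0$.

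The expected regret identity is analogous, and indeed slightly easier, since one of the marginalizations is already performed in \autoref{def:EUER}: after summing out $\pi_i$, equation \eqref{defeq:ER_from_Regret} simplifies to $\mathrm{ER}_i(\theta|\underline{\sigma};\underline{\pi}) = \sum_{\underline{a}} R_i(\underline{a};\theta|\underline{\sigma})\,\Pi(\underline{a}|\underline{\sigma})$. Starting from the defining sum $\mathrm{ER}_i(\underline{\sigma};\underline{\pi}) = \sum_{\underline{a},\theta} R_i(\underline{a};\theta|\underline{\sigma})\,\Pi_i(\underline{a};\theta|\underline{\sigma};\underline{\pi})$, I would factor $\pi_i(\theta)$ out of the inner sum over $\underline{a}$, recognise the remaining object as $\mathrm{ER}_i(\theta|\underline{\sigma};\underline{\pi})$, and then discard the $\theta\notin\supp(\pi_i)$ terms to obtain the support-restricted form. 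No step poses a genuine obstacle here; this is essentially bookkeeping, and the only point requiring care is keeping the notation of sums over full action profiles versus single-player actions distinct, and explicitly using the $^{[i]\leftarrow a_i}$ substitution so that the $i$-th coordinate is not double-counted.
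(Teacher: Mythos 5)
Your proposal is correct and is exactly the direct definition-unpacking the paper leaves implicit (the lemma is stated as ``almost trivial'' with no written proof): factor $\Pi_i(\underline{a};\theta|\underline{\sigma};\underline{\pi})$ to pull out $\sigma_i(a_i)$ resp.\ $\pi_i(\theta)$, identify the inner sums with the conditional expected quantities, and drop the zero-weight terms. The one point you rightly flag -- that the $i$-th coordinate of $\underline{b}$ is overwritten so the residual $\sum_{b_i}\sigma_i(b_i)=1$ is harmless -- is handled correctly.
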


\subsection{Extended Equilibrium}

Similarly to the definition of the Nash equilibrium \cite{paper:Nash1951,book:GameTheory}, we can introduce the concept of an extended equilibrium, which stands for a complete strategy profile -- containing both a real strategy profile and a set of fictional/imaginary strategies or subjective priors for all players -- which fulfils the following inequalities:

\begin{definition}[Extended Equilibrium]
\label{def:ExEq}
We call a complete strategy profile $\underline{\rho}^* = (\underline{\sigma}^*;\underline{\pi}^*) \in \mathcal{R}$ an extended equilibrium if it fulfils the following set of inequalities:

    \begin{equation}
    \label{defeq:ExtendedEquilibriumEU}
        \forall i, \forall a_i \in \mathcal{A}_i, \  \mathrm{EU}_i(\underline{\sigma}^*;\underline{\pi}^*) \ge \mathrm{EU}_i(a_i | \underline{\sigma}^*;\underline{\pi}^*)
    \end{equation}
    \begin{equation}
    \label{defeq:ExtendedEquilibriumER}
        \forall i, \forall \theta \in \Theta, \  \mathrm{ER}_i(\underline{\sigma}^*;\underline{\pi}^*) \ge \mathrm{ER}_i(\theta | \underline{\sigma}^*;\underline{\pi}^*)
    \end{equation}

\end{definition}

In words, the extended equilibrium is a set of strategies and beliefs, such as no player expects increased expected utility by unilaterally deviating from her strategy, and all players project emergent subjective priors to the uncertain parameters such that it maximizes their expected regret.

\subsubsection{Example of an extended equilibrium}

For a concrete example, consider the ``Generals and the weather'' two-player game with uncertainty, formally defined in section \ref{sec:ExampleDef}.

\begin{example}[Extended equilibrium of the $\mathcal{GW}$ game]
\label{Ex:EEGW}
Direct calculation yields the following extended equilibrium $(\underline{\sigma}^*;\underline{\pi}^*)$ for the ``Generals and the weather'' ($\mathcal{GW}$) defined in section \ref{sec:ExampleDef} -- where $\underline{\sigma}^* = (\sigma_1^*, \sigma_2^*)$ and $\underline{\pi}^* = (\pi_1^*, \pi_2^*)$:

\begin{equation}
    \sigma_1^* = (p, \bar{p}), \quad
    \sigma_2^* = (q, \bar{q})
\end{equation}

\begin{equation}
    \pi_1^* = (P, \bar{P}), \quad
    \pi_2^* = (Q, \bar{Q})
\end{equation}

With the parameters:
\begin{table}[H]
    \centering
    \renewcommand{\arraystretch}{1.5}
    \begin{tabular}{|c|c|c||c|c|c|}
        \hline
        \multicolumn{3}{|c||}{\textbf{Parameters ($x$)}} & \multicolumn{3}{c|}{\textbf{Complements ($\bar{x} = 1-x$)}} \\
        \hline
        \textbf{Variable} & \textbf{Exact expr.} & \textbf{Num. approx.} & \textbf{Variable} & \textbf{Exact expr.} & \textbf{Num. approx.} \\
        \hline
        $p$ & $1-1/\sqrt{2}$ & 0.293 & $\bar{p}$ & $1/\sqrt{2}$ & 0.707 \\
        \hline
        $q$ & $2-\sqrt{2}$ & 0.586 & $\bar{q}$ & $\sqrt{2} - 1$ & 0.414 \\
        \hline
        $P$ & $1/\sqrt{2}$ & 0.707 & $\bar{P}$ & $1-1/\sqrt{2}$ & 0.293 \\
        \hline
        $Q$ & $\sqrt{2} - 1$ & 0.414 & $\bar{Q}$ & $2-\sqrt{2}$ & 0.586 \\
        \hline
    \end{tabular}
    \caption{Exact values and numerical approximations of the equilibrium parameters and their complements for the ``Generals and the weather'' ($\mathcal{GW}$) game's extended equilibrium.}
    \label{tab:variables}
\end{table}

To check that this extended strategy profile is indeed an extended equilibrium, we can directly calculate the expected utilities and expected regrets:

\begin{equation}
\begin{split}
& \mathrm{EU}_1(\text{``Up''} | \underline{\sigma}^*;\underline{\pi}^*) = P \ q + \bar{P} \ q = 2 - \sqrt{2} \\
& \mathrm{EU}_1(\text{``Down''} | \underline{\sigma}^*;\underline{\pi}^*) = P \ \bar{q} + \bar{P} \ (q+\bar{q}) = 2 - \sqrt{2}
\end{split}
\end{equation}

\begin{equation}
\begin{split}
& \mathrm{EU}_2(\text{``Up''} | \underline{\sigma}^*;\underline{\pi}^*) = Q \ \bar{p} = 1 - 1/\sqrt{2}\\
& \mathrm{EU}_2(\text{``Down''} | \underline{\sigma}^*;\underline{\pi}^*) = Q \ p + \bar{Q} \ p = 1 - 1/\sqrt{2}
\end{split}
\end{equation}

To derive the expected regrets, first, we can construct the action and state-dependent expected or effective utility matrices for both players, defined in \eqref{defeq:EUMatrix}  -- where rows represent the action of a given player, and columns the possible states of the weather:

\begin{equation}
    \underline{\underline{\mathrm{EU}}}^*_1
    :=
    \mathrm{EU}_1(.;.|\underline{\sigma}^*)
    =
    \begin{bmatrix}
        q & q \\
        \bar{q} & 1
    \end{bmatrix}, \quad
    \underline{\underline{\mathrm{EU}}}_2^*
    :=
    \mathrm{EU}_2(.;.|\underline{\sigma}^*)
    =
    \begin{bmatrix}
        \bar{p} & 0 \\
        p & p
    \end{bmatrix}
\end{equation}

By observing that $q > \bar{q}$ and $p < \bar{p}$ we arrive to the following effective regret matrices -- as defined in \eqref{eq:RegretMatrix}:

\begin{equation}
    \underline{\underline{\mathrm{ER}}}_1^*
    :=
    \mathrm{ER}_1(.;. | \underline{\sigma}^*)
    =
    \begin{bmatrix}
        0 & 1-q \\
        q-\bar{q} & 0
    \end{bmatrix}, \quad
    \underline{\underline{\mathrm{ER}}}_2^*
    :=
    \mathrm{ER}_2(.;. | \underline{\sigma}^*)
    =
    \begin{bmatrix}
        0 & p \\
        \bar{p} - p & 0
    \end{bmatrix}
\end{equation}

From which the expected regrets follow (yielding the same result that eq. \eqref{defeq:ER_from_Regret} would give):

\begin{equation}
\begin{split}
& \mathrm{ER}_1(\text{``Calm''} | \underline{\sigma}^*;\underline{\pi}^*) = (q - \bar{q}) \ \bar{p} = 3/\sqrt{2} - 2\\
& \mathrm{ER}_1(\text{``Storm''} | \underline{\sigma}^*;\underline{\pi}^*) = (1 - q) \ p = 3/\sqrt{2} - 2
\end{split}
\end{equation}

\begin{equation}
\begin{split}
& \mathrm{ER}_2(\text{``Calm''} | \underline{\sigma}^*;\underline{\pi}^*) = (\bar{p} - p) \ \bar{q} = 3 - 2 \sqrt{2}\\
& \mathrm{ER}_1(\text{``Storm''} | \underline{\sigma}^*;\underline{\pi}^*) = p \ q = 3 - 2 \sqrt{2}
\end{split}
\end{equation}

Therefore, because for all actions and states the expected utilities and regrets take the same values, the requirements for the extended equilibrium in Def. \autoref{def:ExEq} must hold (in fact, all inequalities in \eqref{defeq:ExtendedEquilibriumEU} and \eqref{defeq:ExtendedEquilibriumER} become equalities) with the following expected utilities and regrets:

\begin{equation}
    \begin{split}
        &\mathrm{EU}_1^* = 2 - \sqrt{2} \approx 0.586 \\
        &\mathrm{EU}_2^* = 1 - 1/\sqrt{2} \approx 0.293 \\
        &\mathrm{ER}_1^* = 3/\sqrt{2} - 2 \approx 0.121 \\
        &\mathrm{ER}_2^* = 3 - 2 \sqrt{2} \approx 0.172
    \end{split}
\end{equation}

\end{example}

\begin{remark}
    It can be shown by direct calculation that the extended equilibrium of the $\mathcal{GW}$ game is unique.
    By parametrizing all possible complete strategy profiles with the $p, q, P, Q \in [0,1]$ variables, one can reduce all $2+2+2+2=8$ inequalities defining the extended equilibrium to one single set of parameters\footnote{The calculation can be automated as well, symbolic mathematical software such as \textit{Mathematica 13.0} yields the result by using 
    \texttt{Solve[]} or \texttt{Reduce[]} \cite{reference.wolfram_2021_solve,reference.wolfram_2021_reduce}.} -- which equals to the presented equilibrium values in the previous example.
    
\end{remark}

\section{Existence of the Extended Equilibrium}

\begin{theorem}[Existence of the Extended Equilibrium]
\label{thm:ExExEq}
    For any finite game with uncertainty -- as defined in \autoref{def:GameWithUncertainty} -- there exists a complete strategy profile fulfilling the requirements -- defined in \autoref{def:ExEq} -- for the extended equilibrium.
    
\end{theorem}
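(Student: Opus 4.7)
The plan is to adapt John Nash's 1951 fixed-point construction directly to the enlarged strategy space $\mathcal{R} = \mathcal{S} \times \mathcal{P}$. Since $\mathcal{R}$ is a finite product of probability simplices, it is a nonempty, compact, convex subset of a Euclidean space. I will define a continuous self-map $T : \mathcal{R} \to \mathcal{R}$ whose fixed points coincide exactly with the complete strategy profiles satisfying \eqref{defeq:ExtendedEquilibriumEU} and \eqref{defeq:ExtendedEquilibriumER}, and then invoke Brouwer's theorem to conclude existence.

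The map $T$ is defined componentwise, mirroring Nash's construction in each of the two types of coordinates. For every player $i$ and action $a_i \in \mathcal{A}_i$, introduce the \emph{excess utility}
\[
\varphi_i(a_i \mid \underline{\sigma}; \underline{\pi}) := \max\bigl(0,\ \mathrm{EU}_i(a_i \mid \underline{\sigma}; \underline{\pi}) - \mathrm{EU}_i(\underline{\sigma}; \underline{\pi})\bigr),
\]
and let the updated strategy $\tilde{\sigma}_i \in \Delta(\mathcal{A}_i)$ have components
\[
\tilde{\sigma}_i(a_i) := \frac{\sigma_i(a_i) + \varphi_i(a_i \mid \underline{\sigma}; \underline{\pi})}{1 + \sum_{a_i' \in \mathcal{A}_i} \varphi_i(a_i' \mid \underline{\sigma}; \underline{\pi})}.
\]
Analogously, for every $\theta \in \Theta$, define the \emph{excess regret}
\[
\psi_i(\theta \mid \underline{\sigma}; \underline{\pi}) := \max\bigl(0,\ \mathrm{ER}_i(\theta \mid \underline{\sigma}; \underline{\pi}) - \mathrm{ER}_i(\underline{\sigma}; \underline{\pi})\bigr),
\]
and let the updated prior $\tilde{\pi}_i \in \Delta(\Theta)$ have components
\[
\tilde{\pi}_i(\theta) := \frac{\pi_i(\theta) + \psi_i(\theta \mid \underline{\sigma}; \underline{\pi})}{1 + \sum_{\theta' \in \Theta} \psi_i(\theta' \mid \underline{\sigma}; \underline{\pi})}.
\]
Setting $T(\underline{\rho}) := (\tilde{\underline{\sigma}}; \tilde{\underline{\pi}})$ gives a self-map of $\mathcal{R}$ by construction. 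Continuity of $T$ follows because the expected utilities from \autoref{def:EUER} are multilinear in the entries of $\underline{\rho}$, the expected regrets inherit continuity from the effective regret matrix \eqref{eq:RegretMatrix} as a maximum of finitely many polynomials in $\underline{\sigma}$, and $\max(0,\cdot)$ is continuous. Brouwer's theorem then yields a fixed point $\underline{\rho}^* = (\underline{\sigma}^*; \underline{\pi}^*) \in \mathcal{R}$.

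It remains to verify that any fixed point of $T$ satisfies the extended-equilibrium inequalities; this is the classical Nash contradiction, run once for strategies and once for priors. Suppose for some player $i$ there exists an action $a_i^\dagger$ with $\varphi_i(a_i^\dagger \mid \underline{\rho}^*) > 0$, so the denominator $D_i := 1 + \sum_{a_i'} \varphi_i(a_i' \mid \underline{\rho}^*)$ strictly exceeds $1$. The fixed-point equation $\sigma_i^*(a_i) = (\sigma_i^*(a_i) + \varphi_i(a_i \mid \underline{\rho}^*))/D_i$ forces $\varphi_i(a_i \mid \underline{\rho}^*) = (D_i - 1)\,\sigma_i^*(a_i)$, which is strictly positive for every $a_i \in \supp(\sigma_i^*)$. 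Every action played with positive probability would therefore strictly beat the average expected utility, contradicting \autoref{lemma:EUER}, which expresses $\mathrm{EU}_i(\underline{\sigma}^*; \underline{\pi}^*)$ as a convex combination of these same quantities over $\supp(\sigma_i^*)$. Hence $\varphi_i \equiv 0$ and \eqref{defeq:ExtendedEquilibriumEU} holds; the identical argument applied to $\psi_i$ using the second identity of \autoref{lemma:EUER} gives \eqref{defeq:ExtendedEquilibriumER}. The main conceptual point — and really the only novelty beyond Nash — is that the priors $\pi_i$ are treated in perfect symmetry with the strategies $\sigma_i$, with ``best response'' replaced by ``worst-case $\theta$ for expected regret''; the only subtlety I anticipate, apart from bookkeeping across the two parallel updates, is confirming continuity of $\mathrm{ER}_i(\theta \mid \underline{\sigma}; \underline{\pi})$, which reduces to the standard observation that a finite maximum of polynomials is continuous.
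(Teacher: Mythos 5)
Your proposal is correct and follows essentially the same route as the paper: the same ReLU-based auxiliary self-map on $\mathcal{R}$ (the paper's $\Upsilon = (\Phi;\Psi)$), Brouwer's theorem on the compact convex product of simplices, and the Nash-style argument that fixed points satisfy \eqref{defeq:ExtendedEquilibriumEU} and \eqref{defeq:ExtendedEquilibriumER}. Your contradiction step (every action in $\supp(\sigma_i^*)$ strictly beating the average, against Lemma \autoref{lemma:EUER}) is just a slightly more direct phrasing of the paper's weighted-sum computation, so no substantive difference.
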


To prove this general existence theorem, we first introduce a continuous map on the space of complete strategy profiles $\Upsilon : \mathcal{R} \mapsto \mathcal{R}$. To make the definition more concise, we introduce the Rectified Linear Unit, i.e. the $\mathrm{ReLU}(.)$ function \cite{paper:HintonReLU2010,paper:Fukushima1975,paper:Householder1941} in a standard way:

\begin{equation}
\label{eqdef:ReLU}
    \mathrm{ReLU}(x) = \max(0,x)
\end{equation}

\begin{definition}[Auxiliary map $\Upsilon$]
\label{def:Upsilon}
We define a continuous auxiliary map $\Upsilon : \mathcal{R} \mapsto \mathcal{R}$ by first introducing the functions $\varphi_{i,a_i}(.;.)$ and $\psi_{i,\theta}(.;.)$:

    \begin{equation}
    \label{eq:varphiDef}
        \varphi_{i,a_i}(\underline{\sigma};\underline{\pi}) = 
        \mathrm{ReLU} 
        \left (
        \mathrm{EU_i}(a_i|\underline{\sigma};\underline{\pi})
        -
        \mathrm{EU_i}(\underline{\sigma};\underline{\pi})
        \right )
    \end{equation}

    \begin{equation}
    \label{eq:psiDef}
        \psi_{i,\theta}(\underline{\sigma};\underline{\pi}) = 
        \mathrm{ReLU} 
        \left (
        \mathrm{ER}_i(\theta | \underline{\sigma};\underline{\pi})
        -
        \mathrm{ER}_i(\underline{\sigma};\underline{\pi})
        \right )
    \end{equation}

Then by introducing the mappings $\Phi(.;.)$ and $\Psi(.;.)$

    \begin{equation}
        \underline{\sigma}' = \Phi(\underline{\sigma};\underline{\pi})
        \iff
        \sigma_i'(a_i) = \frac{\sigma_i(a_i) + \varphi_{i,a_i}(\underline{\sigma};\underline{\pi})}{1 + \sum_{b_i \in \mathcal{A}_i} \varphi_{i,b_i}(\underline{\sigma};\underline{\pi})} 
    \end{equation}

    \begin{equation}
        \underline{\pi}' = \Psi(\underline{\sigma};\underline{\pi})
        \iff
        \pi'_i(\theta) = \frac{\pi_i(\theta) + \psi_{i,\theta}(\underline{\sigma};\underline{\pi})}{1 + \sum_{\chi \in \Theta} \psi_{i,\chi}(\underline{\sigma};\underline{\pi})} 
    \end{equation}

And finally, by combining these maps:
    
    \begin{equation}
        \Upsilon(\underline{\rho})
        =
        \Upsilon((\underline{\sigma};\underline{\pi}))
        =
        (\Phi(\underline{\sigma};\underline{\pi}); \Psi(\underline{\sigma};\underline{\pi})) 
        =
        (\underline{\sigma}';\underline{\pi}') 
        =
        \underline{\rho}'
    \end{equation}
    
\end{definition}

The simple but crucial observation is that the defined auxiliary map is a continuous self-map of complete strategy profiles. (All $\varphi_{i,a_i}$ and $\psi_{i,\theta}$ are non-negative, and the normalization in the definition of $\Phi$ and $\Psi$ ensures that all $\sigma_i'$ and $\pi_i'$ are proper probability distributions):

\begin{equation}
    \Upsilon \in C^0(\mathcal{R},\mathcal{R})
\end{equation}

The set of complete strategies $\mathcal{R}$ is a finite-dimensional compact and convex set. (Because it is the product of finite dimensional simplexes. The dimension of the space $\mathcal{R}$ can be given explicitly: $D = (|\Theta|-1)^N \ \prod_{i=1}^N (|\mathcal{A}_i|-1)$.)
Therefore, based on Brouwer fixed-point theorem \cite{misc:Jiang2007,book:FixedPointTheory,book:HandbookOfTopologicalFixedPointTheory,paper:NinetyYearsOfBrouwer,paper:Brouwer1911} there is a fixed point, providing the following simple lemma:

\begin{lemma}[Existence of a Fixed point]
\label{lemma:FixedPointExists}
The auxiliary map $\Upsilon(.)$ has (at least one) fixed point:
    \begin{equation}
    \exists \underline{\rho}^* \in \mathcal{R}, \quad \Upsilon(\underline{\rho}^*) = \underline{\rho}^* 
\end{equation}
\end{lemma}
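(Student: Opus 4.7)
The plan is to invoke the Brouwer fixed-point theorem directly, as telegraphed by the paragraph preceding the lemma. Three ingredients are needed: that $\mathcal{R}$ is a non-empty, compact, convex subset of a finite-dimensional Euclidean space; that $\Upsilon$ maps $\mathcal{R}$ into itself; and that $\Upsilon$ is continuous. All three are essentially built into the construction in Definition \ref{def:Upsilon}, so the proof is largely a matter of spelling out why.

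First I would verify the domain: each $\Delta(\mathcal{A}_i)$ and each $\Delta(\Theta)$ is a simplex and therefore non-empty, compact, and convex, and these properties are preserved under finite Cartesian products, so $\mathcal{R}$ qualifies. Next, for the self-map property, I would use non-negativity of $\mathrm{ReLU}$ to conclude that every $\varphi_{i,a_i}$ and $\psi_{i,\theta}$ is non-negative, so the numerators in the definitions of $\sigma_i'(a_i)$ and $\pi_i'(\theta)$ are non-negative. A direct computation then shows that summing these numerators over $a_i\in\mathcal{A}_i$ (resp.\ over $\theta\in\Theta$) reproduces exactly the denominator, since $\sum_{a_i}\sigma_i(a_i)=\sum_{\theta}\pi_i(\theta)=1$; hence $\sigma_i'\in\Delta(\mathcal{A}_i)$ and $\pi_i'\in\Delta(\Theta)$, so $\Upsilon(\underline{\rho})\in\mathcal{R}$. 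For continuity, I would note that $\mathrm{EU}_i(\underline{\sigma};\underline{\pi})$ and $\mathrm{ER}_i(\theta|\underline{\sigma};\underline{\pi})$ are continuous functions of the coordinates of $(\underline{\sigma};\underline{\pi})$ --- multilinear in the strategies, together with a continuous finite $\max$ in the regret matrix \eqref{eq:RegretMatrix} --- so the inputs to the $\mathrm{ReLU}$'s are continuous. Composing with the continuous $\mathrm{ReLU}$ and dividing by a denominator bounded below by $1$ (thus never vanishing) preserves continuity, giving $\Upsilon\in C^0(\mathcal{R},\mathcal{R})$.

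With all three hypotheses verified, Brouwer's fixed-point theorem supplies a $\underline{\rho}^*\in\mathcal{R}$ with $\Upsilon(\underline{\rho}^*)=\underline{\rho}^*$. There is no genuine obstacle here; if I had to flag a subtlety it would be the innocuous-looking ``$+1$'' in the denominators of $\Phi$ and $\Psi$. Without it one would need to treat the degenerate case in which all $\varphi_{i,\cdot}$ or all $\psi_{i,\cdot}$ vanish, and the threat of division by zero would endanger both well-definedness and continuity. With this device --- lifted essentially verbatim from Nash's 1951 argument --- the entire self-map check collapses to the telescoping identity above, and the lemma follows immediately.
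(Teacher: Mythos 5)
Your proposal is correct and follows exactly the paper's route: the paper likewise notes that $\mathcal{R}$ is a finite-dimensional compact convex product of simplexes, that $\Upsilon$ is a continuous self-map (non-negativity of the $\varphi_{i,a_i}$, $\psi_{i,\theta}$ plus the normalization), and then invokes Brouwer's fixed-point theorem. Your additional details (the telescoping sum giving the denominator, and the denominator being bounded below by $1$) are just a fuller spelling-out of the same argument.
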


The final step toward the proof of \autoref{thm:ExExEq} is the following equivalence theorem:

\begin{theorem}[Equivalence of Extended Equilibrium and Fixed points]
\label{thm:Equivalence}
A complete strategy profile is an extended equilibrium if and only if it is a fixed point of the $\Upsilon : \mathcal{R} \mapsto \mathcal{R}$ map. Formally:

    \begin{equation}
        \Upsilon(\underline{\rho}^*) = \underline{\rho}^* \iff
        \underline{\rho}^*=(\underline{\sigma}^*;\underline{\pi}^*) \text { is an extended equilibrium (as defined in \autoref{def:ExEq})}
    \end{equation}
\end{theorem}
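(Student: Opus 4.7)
The plan is to mimic Nash's classical argument for both components of the complete strategy profile, treating the strategy part (driven by $\varphi$) and the prior part (driven by $\psi$) in parallel. The easy direction is immediate: if $\underline{\rho}^* = (\underline{\sigma}^*;\underline{\pi}^*)$ is an extended equilibrium, then by \eqref{defeq:ExtendedEquilibriumEU} each argument of the $\mathrm{ReLU}$ in \eqref{eq:varphiDef} is non-positive, so $\varphi_{i,a_i}(\underline{\sigma}^*;\underline{\pi}^*)=0$ for every $i,a_i$; similarly \eqref{defeq:ExtendedEquilibriumER} forces $\psi_{i,\theta}(\underline{\sigma}^*;\underline{\pi}^*)=0$. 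The normalising denominators collapse to $1$, the numerators reduce to the originals, and $\Upsilon(\underline{\rho}^*)=\underline{\rho}^*$.

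For the harder direction, suppose $\Upsilon(\underline{\rho}^*)=\underline{\rho}^*$ and fix a player $i$. The identity from \autoref{lemma:EUER},
\begin{equation}
\mathrm{EU}_i(\underline{\sigma}^*;\underline{\pi}^*) = \sum_{a_i \in \mathcal{A}_i} \mathrm{EU}_i(a_i|\underline{\sigma}^*;\underline{\pi}^*)\,\sigma_i^*(a_i),
\end{equation}
exhibits $\mathrm{EU}_i(\underline{\sigma}^*;\underline{\pi}^*)$ as a convex combination of the values $\mathrm{EU}_i(a_i|\underline{\sigma}^*;\underline{\pi}^*)$ weighted by $\sigma_i^*$, so there must exist some $a_i^o \in \supp(\sigma_i^*)$ with $\mathrm{EU}_i(a_i^o|\underline{\sigma}^*;\underline{\pi}^*) \le \mathrm{EU}_i(\underline{\sigma}^*;\underline{\pi}^*)$, and hence $\varphi_{i,a_i^o}(\underline{\sigma}^*;\underline{\pi}^*)=0$. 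Writing the fixed-point equation for this coordinate gives $\sigma_i^*(a_i^o) = \sigma_i^*(a_i^o)/\bigl(1+\sum_{b_i}\varphi_{i,b_i}\bigr)$; since $\sigma_i^*(a_i^o)>0$, this forces $\sum_{b_i \in \mathcal{A}_i}\varphi_{i,b_i}(\underline{\sigma}^*;\underline{\pi}^*)=0$. Non-negativity of each $\varphi_{i,b_i}$ then yields $\varphi_{i,a_i}(\underline{\sigma}^*;\underline{\pi}^*)=0$ for every $a_i$, which by \eqref{eqdef:ReLU} and \eqref{eq:varphiDef} is exactly \eqref{defeq:ExtendedEquilibriumEU}.

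The same averaging trick applies verbatim to the prior component, using the second identity in \autoref{lemma:EUER}: $\mathrm{ER}_i(\underline{\sigma}^*;\underline{\pi}^*)$ is a $\pi_i^*$-weighted average of the $\mathrm{ER}_i(\theta|\underline{\sigma}^*;\underline{\pi}^*)$, so some $\theta^o \in \supp(\pi_i^*)$ satisfies $\mathrm{ER}_i(\theta^o|\underline{\sigma}^*;\underline{\pi}^*) \le \mathrm{ER}_i(\underline{\sigma}^*;\underline{\pi}^*)$, giving $\psi_{i,\theta^o}=0$; the fixed-point equation for $\pi_i^*(\theta^o)$ together with $\pi_i^*(\theta^o)>0$ forces $\sum_{\chi}\psi_{i,\chi}=0$, hence $\psi_{i,\theta}(\underline{\sigma}^*;\underline{\pi}^*)=0$ for every $\theta$, which is \eqref{defeq:ExtendedEquilibriumER}. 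Ranging over all players completes this direction.

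I do not anticipate a genuine obstacle: the argument is Nash's 1951 trick applied twice, once on each factor of $\mathcal{R} = \mathcal{S}\times\mathcal{P}$. The only point to be careful about is verifying that the two pieces of the argument are genuinely independent — i.e.\ that the averaging lemma for $\mathrm{EU}_i$ does not require any assumption on $\pi_i$ and vice versa — which is transparent from \autoref{lemma:EUER}. Combining this equivalence with \autoref{lemma:FixedPointExists} immediately yields \autoref{thm:ExExEq}.
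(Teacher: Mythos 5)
Your proposal is correct and follows essentially the same route as the paper: the easy direction is identical, and the hard direction applies Nash's 1951 argument separately to the strategy factor and the prior factor, with the averaging identities of Lemma \autoref{lemma:EUER} doing the real work in both cases. The only (cosmetic) difference is in how the normalizer is killed: you pick a below-average action $a_i^o \in \supp(\sigma_i^*)$ (respectively a state $\theta^o \in \supp(\pi_i^*)$), note that its $\varphi$ (resp.\ $\psi$) vanishes, and read off $\sum_{b_i}\varphi_{i,b_i}=0$ from that single fixed-point coordinate, whereas the paper multiplies the fixed-point identities \eqref{eq:Fixed_varphi} by $\sigma_i^*(a_i)$, sums over the support, and derives the contradiction $0=\lambda_i\sum_{a_i\in\supp(\sigma_i^*)}\left(\sigma_i^*(a_i)\right)^2$ -- both variants rest on the same lemma and yield the same conclusion, yours avoiding the argument by contradiction.
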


\begin{proof}

We prove both directions of the statement.

{\bf Extended Equilibrium $\implies$ Fixed point:}
As we recall the definitions of $\varphi$ and $\psi$ functions in \autoref{def:Upsilon}, we see that because of the $\mathrm{ReLU}(.)$ function and the inequalities defining the extended equilibrium \eqref{eq:EU_ineq} and \eqref{eq:ER_ineq}, all $\varphi_{i,a_i}(\underline{\sigma}^*;\underline{\pi}^*)$ and $\psi_{i,\theta}(\underline{\sigma}^*;\underline{\pi}^*)$ functions has to be $0$ in an extended equilibrium.

Therefore, a complete strategy profile, which is an extended equilibrium, is also a fixed point of $\Upsilon(.)$.

\

{\bf Fixed point $\implies$ Extended equilibrium:}
For this second direction, we mainly follow the detailed steps in \cite{misc:Jiang2007}. The fixed point property of the complete strategy profile can be separated into a real and a fictional part:

\begin{equation}
    \Upsilon(\underline{\rho}^*) =
    \underline{\rho}^*
    =
    (\underline{\sigma}^* ; \underline{\pi}^*)
    \iff
    \underline{\sigma}^* = \Phi(\underline{\sigma}^* ; \underline{\pi}^*) \text{ and }
    \underline{\pi}^* = \Psi(\underline{\sigma}^* ; \underline{\pi}^*)
\end{equation}

Starting with the real strategies we get:

\begin{equation}
    \sigma_i^*(a_i) = 
    \frac{\sigma_i^*(a_i) + \varphi_{i,a_i}(\underline{\sigma}^*;\underline{\pi}^*)}{1 + \sum_{b_i \in \mathcal{A}_i} \varphi_{i,b_i}(\underline{\sigma}^*;\underline{\pi}^*)} 
\end{equation}

which implies the following equations:

\begin{equation}
\label{eq:Fixed_varphi}
    \varphi_{i,a_i}(\underline{\sigma}^*;\underline{\pi}^*)
    = \sigma_i^*(a_i) \ \lambda_i, \quad 
    \lambda_i = 
    \sum_{b_i \in \mathcal{A}_i} \varphi_{i,b_i}(\underline{\sigma}^*;\underline{\pi}^*)
\end{equation}

We will prove by contradiction that all $\lambda_i$ has to be $0$.
Let us assume that some $\lambda_i > 0$. In this case $\varphi_{i,a_i}(\underline{\sigma}^*;\underline{\pi}^*) > 0$ if $\sigma_i^*(a_i) > 0$ i.e. if $a_i \in \supp(\sigma_i^*)$. Because of the definition of the $\mathrm{ReLU}(.)$ function in \eqref{eqdef:ReLU} we have in this case:

\begin{equation}
\label{eq:varphiEU}
    \varphi_{i,a_i}(\underline{\sigma}^*;\underline{\pi}^*) > 0
    \implies
    \varphi_{i,a_i}(\underline{\sigma}^*;\underline{\pi}^*)
    =
    \mathrm{EU_i}(a_i|\underline{\sigma}^*;\underline{\pi}^*)
    -
    \mathrm{EU_i}(\underline{\sigma}^*;\underline{\pi}^*)
\end{equation}

Now let us take the following weighted sum of the set of equations in \eqref{eq:Fixed_varphi}:

\begin{equation}
    \sum_{a_i \in \supp(\sigma_i^*)} \varphi_{i,a_i}(\underline{\sigma}^*;\underline{\pi}^*)
    \ \sigma_i^*(a_i)
    =
    \lambda_i
    \sum_{a_i \in \supp(\sigma_i^*)} \left ( \sigma_i^*(a_i) \right )^2
\end{equation}

Substituting the expression for $\varphi_{i,a_i}(\underline{\sigma}^*;\underline{\pi}^*)$ from \eqref{eq:varphiEU} we get:

\begin{equation}
    \sum_{a_i \in \supp(\sigma_i^*)} 
    \left (
    \mathrm{EU_i}(a_i|\underline{\sigma}^*;\underline{\pi}^*)
    -
    \mathrm{EU_i}(\underline{\sigma}^*;\underline{\pi}^*)
    \right )
    \ 
    \sigma_i^*(a_i) 
    =
    \lambda_i
    \sum_{a_i \in \supp(\sigma_i^*)} \left ( \sigma_i^*(a_i) \right )^2
\end{equation}

which can be rearranged to the following form:

\begin{equation}
    \left (
    \sum_{a_i \in \supp(\sigma_i^*)} 
    \mathrm{EU_i}(a_i|\underline{\sigma}^*;\underline{\pi}^*)
    \ \sigma_i^*(a_i) 
    \right )
    -
    \mathrm{EU_i}(\underline{\sigma}^*;\underline{\pi}^*)
    \left (
    \sum_{a_i \in \supp(\sigma_i^*)} \sigma_i^*(a_i) 
    \right )
    =
    \lambda_i
    \sum_{a_i \in \supp(\sigma_i^*)} \left ( \sigma_i^*(a_i) \right )^2
\end{equation}

Now recalling the almost trivial Lemma \autoref{lemma:EUER}, we get:

\begin{equation}
    0
    =
    \lambda_i
    \sum_{a_i \in \supp(\sigma_i^*)} \left ( \sigma_i^*(a_i) \right )^2
\end{equation}

Therefore:

\begin{equation}
    \lambda_i = 0
\end{equation}

This contradicts with our assumption that some $\lambda_i > 0$, therefore in equilibrium:

\begin{equation}
    \forall i \ \lambda_i = \sum_{b_i \in \mathcal{A}_i} \varphi_{i,b_i}(\underline{\sigma}^*;\underline{\pi}^*) = 0
\end{equation}

and therefore by recalling equation \eqref{eq:Fixed_varphi}, in a fixed point:

\begin{equation}
    \forall i, \ \forall a_i \in \mathcal{A}_i, \ 
    \varphi_{i,a_i}(\underline{\sigma}^*;\underline{\pi}^*) = 0
\end{equation}

which by recalling equation \eqref{eq:varphiDef} and \eqref{eqdef:ReLU} implies that in a fixed point:

\begin{equation}
\label{eq:EU_ineq}
\boxed{
    \forall i, \ \forall a_i \in \mathcal{A}_i, \ 
    \mathrm{EU_i}(a_i|\underline{\sigma}^*;\underline{\pi}^*)
    -
    \mathrm{EU_i}(\underline{\sigma}^*;\underline{\pi}^*)
    \le 0
}
\end{equation}

We can repeat almost identical arguments and perform very similar steps to prove an analogous result for the fictional parts of the complete strategy profile (i.e. for the set of emergent subjective priors).

For the fictional strategies (or the emergent subjective priors) in a fixed point, we get:

\begin{equation}
    \pi^*_i(\theta) = \frac{\pi_i^*(\theta) + \psi_{i,\theta}(\underline{\sigma}^*;\underline{\pi}^*)}{1 + \sum_{\chi \in \Theta} \psi_{i,\chi}(\underline{\sigma}^*;\underline{\pi}^*)} 
\end{equation}

which implies the following equations:

\begin{equation}
\label{eq:Fixed_psi}
    \psi_{i,\theta}(\underline{\sigma}^*;\underline{\pi}^*)
    = \pi_i^*(\theta) \ \mu_i, \quad 
    \mu_i = 
    \sum_{\chi \in \Theta} \psi_{i,\chi}(\underline{\sigma}^*;\underline{\pi}^*)
\end{equation}

We will prove by contradiction that all $\mu_i$ has to be $0$.
Let us assume that some $\mu_i > 0$. In this case $\psi_{i,\theta}(\underline{\sigma}^*;\underline{\pi}^*) > 0$ if $\pi_i^*(\theta) > 0$ i.e. if $\theta \in \supp(\pi_i^*)$. Because of the definition of the $\mathrm{ReLU}(.)$ function in \eqref{eqdef:ReLU} we have in this case:

\begin{equation}
    \psi_{i,\theta}(\underline{\sigma}^*;\underline{\pi}^*) > 0
    \implies
    \psi_{i,\theta}(\underline{\sigma}^*;\underline{\pi}^*)
    =
    \mathrm{ER}_i(\theta | \underline{\sigma};\underline{\pi})
    -
    \mathrm{ER}_i(\underline{\sigma};\underline{\pi})
\end{equation}

Now let us take the following weighted sum of the set of equations in \eqref{eq:Fixed_psi}:

\begin{equation}
    \sum_{\theta \in \supp(\pi_i^*)} \psi_{i,\theta}(\underline{\sigma}^*;\underline{\pi}^*)
    \ \pi_i^*(\theta)
    =
    \mu_i
    \sum_{\theta \in \supp(\pi_i^*)} \left ( \pi_i^*(\theta) \right )^2
\end{equation}

Substituting the expression for $\psi_{i,\theta}(\underline{\sigma}^*;\underline{\pi}^*)$ from \eqref{eq:psiDef} we get:

\begin{equation}
    \sum_{\theta \in \supp(\pi_i^*)} 
    \left (
    \mathrm{ER}_i(\theta | \underline{\sigma}^*;\underline{\pi}^*)
    -
    \mathrm{ER}_i(\underline{\sigma}^*;\underline{\pi}^*)
    \right )
    \ \pi_i^*(\theta) 
    =
    \mu_i
    \sum_{\theta \in \supp(\pi_i^*)} \left ( \pi_i^*(\theta) \right )^2
\end{equation}

which can be rearranged to the following form:

\begin{equation}
    \left (
    \sum_{\theta \in \supp(\pi_i^*)} 
    \mathrm{ER_i}(\theta|\underline{\sigma}^*;\underline{\pi}^*)
    \ \pi_i^*(\theta) 
    \right )
    -
    \mathrm{ER_i}(\underline{\sigma}^*;\underline{\pi}^*)
    \left (
    \sum_{\theta \in \supp(\pi_i^*)} \pi_i^*(\theta) 
    \right )
    =
    \mu_i
    \sum_{\theta \in \supp(\pi_i^*)} \left ( \pi_i^*(\theta) \right )^2
\end{equation}

Now recalling the almost trivial Lemma \autoref{lemma:EUER} we get:

\begin{equation}
    0
    =
    \mu_i
    \sum_{\theta \in \supp(\pi_i^*)} \left ( \pi_i^*(\theta) \right )^2
\end{equation}

Therefore:

\begin{equation}
    \mu_i = 0
\end{equation}

This contradicts with our assumption that some $\mu_i > 0$, therefore in equilibrium:

\begin{equation}
    \forall i \ \mu_i = \sum_{\chi \in \Theta} \psi_{i,\chi}(\underline{\sigma}^*;\underline{\pi}^*) = 0
\end{equation}

and therefore by recalling equation \eqref{eq:Fixed_psi}, in a fixed point:

\begin{equation}
    \forall i, \ \forall \theta \in \Theta, \ 
    \psi_{i,\theta}(\underline{\sigma}^*;\underline{\pi}^*) = 0
\end{equation}

which by recalling equation \eqref{eq:psiDef}  and \eqref{eqdef:ReLU} implies that in a fixed point:

\begin{equation}
\label{eq:ER_ineq}
\boxed{
    \forall i, \ \forall \theta \in \Theta, \ 
    \mathrm{ER}_i(\theta | \underline{\sigma};\underline{\pi})
    -
    \mathrm{ER}_i(\underline{\sigma};\underline{\pi})
    \le 0
    }
\end{equation}

Equation \eqref{eq:EU_ineq} together with equation \eqref{eq:ER_ineq} prove that all fixed points are also extended equilibria.

\end{proof}

After this point, it is enough to gather the previously proved theorems and lemmas to prove \autoref{thm:ExExEq}:

\begin{proof}
{\bf Existence of Extended Equilibrium:}
Lemma \autoref{lemma:FixedPointExists} guarantees the existence of a fixed point of the auxiliary mapping $\Upsilon : \mathcal{R} \mapsto \mathcal{R}$;
while \autoref{thm:Equivalence} assures that a fixed point of $\Upsilon(.)$ is necessarily also a complete strategy profile in extended equilibrium.

This completes the proof of \autoref{thm:ExExEq}.
    
\end{proof}

\begin{remark}[Non-uniqueness]
    In general, the set of extended equilibrium solutions $\mathcal{R}^*$ does not need to contain only one complete strategy profile, i.e. games with uncertainty do not need to have a unique extended equilibrium solution.

    This is intuitive if we observe that if the uncertain parameters do not influence -- or only very weakly influence -- the players' utilities, then the extended equilibrium practically simplifies to a Nash equilibrium, therefore, inherits its non-uniqueness property (Coordination game and Battle of the Sexes being simple examples demonstrating the existence of multiple Nash equilibria) \cite{book:GameTheory}.
    
\end{remark}

\section{No emergent convictions}

An important property of the emergent subjective priors in an extended equilibrium solution is that -- if the uncertain parameter is not irrelevant -- no equilibrium prior will be concentrated on one state. In other words, no player will develop faith or conviction believing that the uncertain parameter has to have a specific value. 

\begin{theorem}[No Fictional Faith]
\label{thm:NoFictionalFaith}
    For any player $i$, a ``pure prior'', $\pi_i^{\eta}(\theta) = \delta_{\eta,\theta}$ (where $\delta$ stands for Kronecker delta\footnote{$\delta_{\eta,\theta} = 1$ if $\theta = \eta$ and $\delta_{\eta,\theta} = 0$ if $\theta \ne \eta$}) can be part of an extended equilibrium $(\underline{\sigma}^*, \underline{\pi}^*)$ only if the parameter is ``irrelevant'' for the $i$-th player, i.e. in the equilibrium there is an action (or actions) that is a best response -- there is no alternative action yielding higher expected utility -- regardless of the unknown parameter. 
    
    If a ``pure prior'' appears in an equilibrium solution for the $i$-th player, then  $\pi_i^*$ is totally degenerate \cite{book:AlgorithmicGameTheory,book:HandbookOfGameTheoryVol3}. Any prior $\pi'_i \in \Delta(\Theta)$ could be substituted to the subjective equilibrium prior of the $i$-th player $\pi_i^{\eta}$, preserving the conditions for the extended equilibrium. Therefore, a continuum set of extended equilibria solutions exists in this case.

\end{theorem}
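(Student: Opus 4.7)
The plan is to convert the two equilibrium inequalities of Definition \autoref{def:ExEq}, evaluated at a pure prior $\pi_i^* = \pi_i^\eta$, into pointwise vanishing of the effective regret matrix \eqref{eq:RegretMatrix}, and then read off both conclusions of the theorem.

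First, assume $\pi_i^* = \pi_i^\eta$ for some $\eta \in \Theta$. Because the prior is concentrated, $\mathrm{EU}_i(a_i|\underline{\sigma}^*;\underline{\pi}^*) = \mathrm{EU}_i(a_i;\eta|\underline{\sigma}^*)$, and $\mathrm{EU}_i(\underline{\sigma}^*;\underline{\pi}^*)$ is the $\sigma_i^*$-weighted average of these quantities over $\supp(\sigma_i^*)$. The inequality \eqref{defeq:ExtendedEquilibriumEU} then reduces to a standard Nash best-response condition in the effective-utility game at parameter $\eta$: since an average that is at least as large as every pure-action value must coincide with the maximum on each weighted term, every $a_i \in \supp(\sigma_i^*)$ attains $\max_{c_i} \mathrm{EU}_i(c_i;\eta|\underline{\sigma}^*)$. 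In the language of \eqref{eq:RegretMatrix} this reads $\mathrm{ER}_i(a_i;\eta|\underline{\sigma}^*) = 0$ for every $a_i \in \supp(\sigma_i^*)$, and \eqref{eq:ERthetaERMatrix} then gives $\mathrm{ER}_i(\eta|\underline{\sigma}^*) = 0$.

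Next I invoke the regret inequality \eqref{defeq:ExtendedEquilibriumER}. With the pure prior, $\mathrm{ER}_i(\underline{\sigma}^*;\underline{\pi}^*) = \mathrm{ER}_i(\eta|\underline{\sigma}^*) = 0$. Combining this with $\mathrm{ER}_i(\underline{\sigma}^*;\underline{\pi}^*) \ge \mathrm{ER}_i(\theta|\underline{\sigma}^*)$ for all $\theta$ and with Lemma \autoref{lemma:NonNegativeRegret} applied via \eqref{eq:ERthetaERMatrix} (which forces $\mathrm{ER}_i(\theta|\underline{\sigma}^*) \ge 0$), I obtain $\mathrm{ER}_i(\theta|\underline{\sigma}^*) = 0$ for every $\theta \in \Theta$. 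Non-negativity of the regret matrix and \eqref{eq:ERthetaERMatrix} then sharpen this to $\mathrm{ER}_i(a_i;\theta|\underline{\sigma}^*) = 0$ for every $a_i \in \supp(\sigma_i^*)$ and every $\theta \in \Theta$, that is, every action actually played by player $i$ is simultaneously a best response for all values of the uncertain parameter. This is precisely the irrelevance conclusion.

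For the total-degeneracy claim, I would substitute an arbitrary $\pi_i' \in \Delta(\Theta)$ for $\pi_i^\eta$ in $\underline{\pi}^*$, leaving $\underline{\sigma}^*$ and the other $\pi_j^*$ fixed. The equilibrium conditions for players $j \ne i$ are untouched since they do not depend on $\pi_i$. For player $i$, the regret inequality becomes $0 \ge 0$ by the pointwise vanishing just established, while \eqref{defeq:ExtendedEquilibriumEU} follows from the fact that every $a_i \in \supp(\sigma_i^*)$ is a pointwise best response to $\mathrm{EU}_i(\cdot;\theta|\underline{\sigma}^*)$ for each $\theta$, a property that survives averaging against any prior $\pi_i'$. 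The delicate step of the argument is the first one: translating a single scalar Nash-type inequality at a degenerate prior into the matrix-level identity $\mathrm{ER}_i(a_i;\eta|\underline{\sigma}^*) = 0$ on the whole support. Once that support identity is secured, the remainder is a short propagation through non-negativity and averaging.
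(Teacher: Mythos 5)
Your proposal is correct and follows essentially the same route as the paper's own proof: pin down $\mathrm{ER}_i(a_i;\eta|\underline{\sigma}^*)=0$ on $\supp(\sigma_i^*)$ from the utility condition, use the regret condition plus Lemma \ref{lemma:NonNegativeRegret} to propagate this to all $\theta\in\Theta$, and then verify that an arbitrary $\pi_i'$ preserves both equilibrium inequalities. No gaps; the step you flag as delicate is handled the same way in the paper via the indifference principle.
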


The proof will be performed using the concepts of the expected or effective utility matrix, and effective regret matrix -- introduced in \eqref{defeq:EUMatrix} and \eqref{eq:RegretMatrix}.
Further important ingredients will be the simple properties of the effective regret matrix -- such as its non-negativity -- captured in Lemma \autoref{lemma:NonNegativeRegret} and Lemma \autoref{lemma:ExistenceBestResponse}.

\begin{proof}

    Let us assume that $\pi^\eta_i(\theta) = \delta_{\eta,\theta}$ is an emergent subjective prior for the $i$-th player in some extended equilibrium $(\underline{\sigma}^*,\underline{\pi}^*)$, i.e. $\pi_i^* = \pi^\eta_i$ for some $\eta \in \Theta$ value of the parameter.

    First, let us see what this assumption implies for the expected utilities expressed by the expected or effective utility matrix \eqref{defeq:EUMatrix}:

    \begin{equation}
        \mathrm{EU}_i^*(a_i) := \mathrm{EU}_i(a_i | \underline{\sigma}^*;\underline{\pi}^*) =
        \sum_{\theta \in \Theta} \mathrm{EU}_i(a_i;\theta | \underline{\sigma}^*) \ \pi_i^*(\theta) =
        \mathrm{EU}_i(a_i;\eta | \underline{\sigma}^*)
    \end{equation}

    \begin{equation}
        \mathrm{EU}_i^* := \mathrm{EU}_i(\underline{\sigma}^*;\underline{\pi}^*) =
        \sum_{a_i \in \mathcal{A}_i} \mathrm{EU}_i(a_i | \underline{\sigma}^*;\underline{\pi}^*) \ \sigma_i^* (a_i)
        =
        \sum_{a_i \in \mathcal{A}_i} \mathrm{EU}_i(a_i,\eta | \underline{\sigma}^*) \ \sigma_i^* (a_i)
    \end{equation}

    The first condition for the extended equilibrium \eqref{eq:EU_ineq} requires:

    \begin{equation}
        \forall a_i \in \mathcal{A}_i, \  \mathrm{EU}_i(\underline{\sigma}^*;\underline{\pi}^*) \ge \mathrm{EU}_i(a_i | \underline{\sigma}^*;\underline{\pi}^*)
    \end{equation}

    This means that for all $a_i \in \supp(\sigma_i^*)$ the expected utilities have to be equal: $\mathrm{EU}_i^*(a_i) = \mathrm{EU}_i^*$ -- the result can be viewed as a simple application of the indifference principle \cite{book:GameTheory}
    -- and for all $d_i \notin \supp{\sigma_i^*}$ they have to be less or equal: $\mathrm{EU}_i^*(d_i) \le \mathrm{EU}_i^*$

    From the definition of the effective regret matrix in \eqref{eq:RegretMatrix} and from the above-mentioned property of the expected utility, we get that the effective regret matrix has to be exactly zero for $(a_i,\eta)$ pairs, if $a_i \in \supp(\sigma_i^*)$:

    \begin{equation}
        \forall a_i \in \supp(\sigma_i^*), \quad \mathrm{ER}_i(a_i;\eta | \underline{\sigma}^*) = 0
    \end{equation}
    
    Therefore, the expected regret has to be zero:

    \begin{equation}
        \mathrm{ER}_i^* := \mathrm{ER}_i(\underline{\sigma}^*,\underline{\pi}^*) =
        \mathrm{ER}_i(\eta|\underline{\sigma}^*) =
        \sum_{a_i \in \mathcal{A}_i} \mathrm{ER}_i(a_i;\eta | \underline{\sigma}^*) \ \sigma_i^*(a_i) = 0
    \end{equation}

    Now we can use the second extended equilibrium condition from \eqref{eq:ER_ineq}:

    \begin{equation}
        \forall \theta \in \Theta, \  \mathrm{ER}_i(\underline{\sigma}^*;\underline{\pi}^*) \ge \mathrm{ER}_i(\theta | \underline{\sigma}^*;\underline{\pi}^*)
    \end{equation}

    Which by recalling \eqref{eq:ERthetaERMatrix} simplifies to:

    \begin{equation}
    \label{eq:RegretNonPositive}
        \forall \theta \in \Theta, \  \sum_{a_i \in \mathcal{A_i}} \mathrm{ER}_i(a_i;\theta | \underline{\sigma}^*) \ \sigma_i^*(a_i) \le 0
    \end{equation}

    Now we recall Lemma \autoref{lemma:NonNegativeRegret}, which states that all entries of the effective regret matrix have to be non-negative: 

    \begin{equation}
        \forall a_i \in \mathcal{A}_i, \ \forall \theta \in \Theta,  \quad \mathrm{ER}_i(a_i;\theta|\underline{\sigma}^*) \ge 0
    \end{equation}

    Lemma \autoref{lemma:NonNegativeRegret} and equation \eqref{eq:RegretNonPositive} together implies that:

    \begin{equation}
    \label{eq:RegretIsZero}
        \forall a_i \in \supp(\sigma_i^*), \ \forall \theta \in \Theta, \quad
        \mathrm{ER}_i(a_i;\theta|\underline{\sigma}^*) = 0
    \end{equation}

    This means that for all parameter values $\theta \in \Theta$, all actions $a_i \in \supp(\sigma_i^*) \subseteq \mathcal{A}_i$ are best responses, i.e. they give the maximal expected utility under any values of the parameter.

    From equation \eqref{eq:RegretIsZero} follows that for any probability distribution $\pi'_i \in \Delta(\Theta)$ the weighted sum of the regrets is $0$, satisfying the regret condition of the extended equilibrium \eqref{eq:ER_ineq}. 

    To see that any $\pi'_i \in \Delta(\Theta)$ satisfies the first extended equilibrium condition \eqref{eq:EU_ineq} as well, recall that -- because of \eqref{eq:RegretIsZero} and the definition of the regret \eqref{eq:RegretMatrix} -- we have:

    \begin{equation}
        \forall a_i \in \supp(\sigma_i^*), \ \forall \theta \in \Theta, \quad
        \mathrm{EU}_i(a_i;\theta|\underline{\sigma}^*) = 
        \max_{c_i} \left ( \mathrm{EU}_i(c_i;\theta|\underline{\sigma}^*) \right )
    \end{equation}

    and further

    \begin{equation}
        \forall d_i \notin \supp(\sigma_i^*), \ \forall \theta \in \Theta, \quad
        \mathrm{EU}_i(d_i;\theta|\underline{\sigma}^*) \le 
        \max_{c_i} \left ( \mathrm{EU}_i(c_i;\theta|\underline{\sigma}^*) \right )
    \end{equation}

    This means that for every state $\theta \in \Theta$ the following inequality holds:

    \begin{equation}
        \forall a_i \in \mathcal{A}_i, \ \forall \theta \in \Theta, \quad
        \sum_{b_i \in \mathcal{A}_i} \mathrm{EU}_i(b_i;\theta|\underline{\sigma}^*) \ \sigma_i^*(b_i)
        \ge 
        \mathrm{EU}_i(a_i;\theta|\underline{\sigma}^*)
    \end{equation}

    And therefore for any probability distribution $\pi'_i \in \Delta(\Theta)$ holds that:

    \begin{equation}
        \forall a_i \in \mathcal{A}_i, \quad
        \sum_{\theta \in \Theta} \sum_{b_i \in \mathcal{A}_i} \mathrm{EU}_i(b_i;\theta|\underline{\sigma}^*) \ \sigma_i^*(b_i) \pi_i'(\theta)
        \ge 
        \sum_{\theta \in \Theta}  \mathrm{EU}_i(a_i;\theta|\underline{\sigma}^*) \ \pi_i'(\theta)
    \end{equation}

    Or equivalently by using the notation for substitution:
    
    \begin{equation}
        \forall a_i \in \mathcal{A}_i, \quad
        \mathrm{EU}_i' = \mathrm{EU}_i(\underline{\sigma}^*;{\underline{\pi}^*}^{[i] \leftarrow \pi_i'})  \ge \mathrm{EU}_i(a_i | \underline{\sigma}^*;{\underline{\pi}^*}^{[i] \leftarrow \pi_i'})
    \end{equation}

    The expected utility and regret of the $j$-th player depends only on her subjective prior $\pi_j$ and the real strategy profile $\underline{\sigma}$, therefore the expected utility and regret for the $j$-th player does not change if we replace $\pi_i^*$ with $\pi_i'$ if $j \ne i$.

    Therefore, we can conclude that after changing $\pi_i^*$ to any $\pi'_i \in \Delta(\Theta)$, the full set of conditions for the extended equilibrium are satisfied:

    \begin{equation}
        \forall i, \forall a_i \in \mathcal{A}_i, \  \mathrm{EU}_i(\underline{\sigma}^*;{\underline{\pi}^*}^{[i] \leftarrow \pi_i'}) \ge \mathrm{EU}_i(a_i | \underline{\sigma}^*;{\underline{\pi}^*}^{[i] \leftarrow \pi_i'})
    \end{equation}
    \begin{equation}
        \forall i, \forall \theta \in \Theta, \  \mathrm{ER}_i(\underline{\sigma}^*;{\underline{\pi}^*}^{[i] \leftarrow \pi_i'}) \ge \mathrm{ER}_i(\theta | \underline{\sigma}^*;{\underline{\pi}^*}^{[i] \leftarrow \pi_i'})
    \end{equation}
    
\end{proof}

\begin{remark}
    \autoref{thm:NoFictionalFaith} -- the ``No Fictional Faith'' theorem -- ensures that the equilibrium priors satisfy a weakened regularity condition. In Bayesian inference,
    the so-called Cromwell's rule \cite{book:Lindley,book:BayesianSocialScience} -- also known as the regularity requirement \cite{paper:Regularity} -- suggests not to use a prior which associates $0$ or $1$ to specific events.
    In extended equilibrium -- if the parameter is not irrelevant to a player -- the subjective equilibrium prior will not associate probability $1$ to any specific parameter, therefore automatically satisfying the second half of Cromwell's rule --in cases where the parameter matters.

    (If the unknown parameter can take only two values, i.e. $|\Theta|=2$, the theorem guarantees the strong form of regularity.)
    
\end{remark}

\begin{example}[Irregular emergent subjective prior]
    To show that $0$ can indeed appear in a player's emergent subjective prior in extended equilibrium, consider the following simple single-player game with uncertainty:

    The player has two possible actions: $\mathcal{A_1} = \{\mathrm{U},\mathrm{D}\}$; while the uncertain parameter can be in three states: $\Theta=\{\mathrm{L},\mathrm{C},\mathrm{R}\}$.
    The utility function -- in this case representable by a single matrix -- is given below:

    \begin{equation}
        U_1(.;.) = 
        \begin{bmatrix}
            1 & 0 & 0 \\
            0 & 0 & 1
        \end{bmatrix}
    \end{equation}

    In this simple case, we can easily construct the effective regret matrix:

    \begin{equation}
        R_1(.;.) = 
        \begin{bmatrix}
            0 & 0 & 1 \\
            1 & 0 & 0
        \end{bmatrix}
    \end{equation}

    It is easy to see and verify that this game has one unique extended equilibrium, with the following strategy and emergent subjective prior pair:

    \begin{equation}
        \sigma_1^* = (1/2,1/2), \quad \pi_1^* = (1/2, 0, 1/2)
    \end{equation}

    \end{example}

    The previous simple example demonstrates that in some cases, zero probability might appear in the equilibrium prior.
    However, it is worth mentioning that the possibility of making an observation and collecting data leads to an extended game with a richer action set. In these extended games, new priors have to be obtained, which opens up the possibility to avoid the problem of ruling out potential states à priori, which might impact our actions.
    (The analysis and discussion of such issues will be performed in a separate paper, dedicated to statistical decision theory, i.e. single-player games with uncertainty.)
    
\section{Relation to other frameworks}

In this section, we show that the results for the ``Generals and the weather'' game (defined in \ref{ex:DefGeneralsAndTheWeather} and the results presented in \ref{Ex:EEGW})  cannot be simply obtained from other well-known constructions for games with incomplete information \cite{book:GameTheory}.

\paragraph{Games with chance nodes:}
Probably the simplest way to incorporate uncertain parameters into a game is to introduce a $N+1$-th ``Chance player'' -- often called ``Nature'' --, which is assumed to make a ``chance move'' with predefined probabilities $p_\theta$.
(We will not discuss this construction separately because its suggestions can be matched with those of the following Bayesian games.)

\paragraph{Bayesian games:}
Also known as Harsányi games\footnote{Called after John C. Harsányi, who introduced Games with Incomplete Information in \cite{paper:Harsanyi_I,paper:Harsanyi_II,paper:Harsanyi_III}.}, which introduces ``types'' for the players and a commonly known prior on these types \cite{paper:ZamirBayesianGames,book:GameTheory}. Formally a Bayesian game is a tuple: $\langle N, \underline{\mathcal{A}}, \underline{T}, \underline{U}, \pi_c \rangle$ where $U_i : \underline{\mathcal{A}} \times \underline{T} \to \mathbb{R}$ and $\pi_c \in \Delta(\underline{T})$.

\paragraph{Bayesian games with inconsistent beliefs:}
In this -- more general but less common -- framework, one needs to specify not only one common prior for the types but a set of priors or beliefs $\underline{\pi}$ \cite{book:GameTheory}.
One can define a -- generalized -- Bayesian equilibrium in games with incomplete information \cite{book:GameTheory}, which gives the same real equilibrium strategies $\underline{\sigma}^*$ of the extended equilibrium if the set of beliefs happens to be the equilibrium priors, i.e. $\underline{\pi} = \underline{\pi}^*$.

Therefore, the concept of extended equilibrium can be viewed as a prior selection mechanism for general Bayesian games with inconsistent beliefs.

\subsubsection{Bayesian Generals and the weather}

The ``Generals and the weather'' can be described both as a game with a chance node or as a Bayesian game. In the following we will attempt to embed the game into a Bayesian framework.

\begin{example}
As an attempt to embed the ``Generals and the weather'' to a Bayesian game, we introduce one extra player, ``Nature'', which can have two types: $T_N = \{ \text{``Calm''}, \text{``Storm''} \}$. Both generals can have only one single type $T_1 = \{*\}, T_2 = \{*\}$.
In this construction, we also need to specify $\pi_c \in \Delta(\underline{T})$, which, in this simple case, can be parametrized by one single probability $P_c \in [0,1]$ representing the probability of the weather -- or ``Nature'' -- being ``Calm''.
The first two players will have the same action sets as in \ref{ex:DefGeneralsAndTheWeather} $\underline{\mathcal{A}} = (\mathcal{A}_1,\mathcal{A}_2,\mathcal{A}_N)$, where $\mathcal{A}_1 = \{\text{``Up''},\text{``Down''}\}$, $\mathcal{A}_2 = \{\text{``Up''},\text{``Down''}\}$. 
However ``Nature'' will have only one trivial action: $\mathcal{A}_N = \{*\}$.

We will use the same utility functions as chosen in \ref{ex:DefGeneralsAndTheWeather} -- the only difference is they now depend on the type of ``Nature'' instead of the value of an uncertain parameter.

It is straightforward to calculate the Bayesian equilibrium strategies \cite{book:GameTheory} of this game as a function of $P_c$, yielding the result:

\begin{equation}
    \sigma^*_1(P_c) = \left(\frac{P_c}{1+P_c}, \frac{1}{1+P_c} \right ), \quad
    \sigma_2^*(P_c) = \left(\frac{1}{1+P_c}, \frac{P_c}{1+P_c} \right ), \quad
    \sigma^*_N=(1)
\end{equation}

(Generally, the strategies depend on the player's own type, but the Generals have no variation in type, and ``Nature'' has a trivial strategy independent of her type.)

\end{example}

\begin{remark}
    It is easy to check that there is no common prior parameter $P_c \in [0,1]$ which could yield equilibrium solutions equal to the quantities found as the extended equilibrium of the original problem in \ref{Ex:EEGW}.
    It is enough to observe that for any $P_c \in [0,1]$ a symmetry is present for the Bayesian equilibrium:
    \begin{equation}
        \sigma^*_1(P_c) (\text{``Up''}) = \sigma^*_2(P_c) (\text{``Down''})
    \end{equation}

    However, this would require an equality for the parameters found in the extended equilibrium solution in \ref{Ex:EEGW}, which is definitely not satisfied:

    \begin{equation}
        p = 1-1/\sqrt{2} \ne \bar{q} = \sqrt{2} - 1
    \end{equation}

    Another remarkable property of the extended equilibrium of the ``Generals and the weather'' game is that the -- partially subjective -- expected utilities of the players do not add up to $1$.

    \begin{equation}
        \mathrm{EU}_1^* + \mathrm{EU}_2^* = 2 - \sqrt{2} + 1 - 1/\sqrt{2}
        = 3 - \left ( \sqrt{2} + 1/\sqrt{2} \right ) 
        \approx 0.879
        < 1
    \end{equation}

    We have chosen the utilities to represent winning chances, and in this zero-sum (or, more accurately, constant-sum) game, they always add up to $1$. In all potential scenarios, either the ``Defender'' or the ``Attacker'' will win, and the other lose.

    The constant-sum property of the utility functions is preserved by the Bayesian equilibrium for any common prior $\pi_c$ but is violated by the -- partially subjective -- expected utilities in the extended equilibrium.
    The discrepancy can be interpreted as a ``perceived price of uncertainty''\footnote{Alternatively, the phenomena could be called the ``perceived price of ambiguity''.} -- relative to a setting where only risk or commonly known/believed randomness is present. 

    \end{remark}

    \begin{remark}
    Formally, one can calculate ``as if'' priors for both players. 
    If we observe only the strategy of the $i$-th player (the ``Defender'' or the ``Attacker'') -- presented in \ref{Ex:EEGW} --, we can deduce that they are behaving \emph{as if} a common prior were $P'_{c,i}$. The calculation of these ``as if'' priors yields the following result:

    \begin{equation}
        P'_{c,1} = \sqrt{2} - 1, \quad
        P'_{c,2} = 1/\sqrt{2}
    \end{equation}

    Curiously, when we compare these values with the results for the extended equilibrium priors in \ref{Ex:EEGW}, we see the quantities appear in reverse: $P'_{c,1} = Q$ and $P'_{c,2} = P$.
    
\end{remark}

\section{Conclusion}

In this paper, we have introduced a new framework that unifies non-cooperative game theory and statistical decision-making under a minimax regret perspective. By allowing each player to pick not only a mixed strategy but also an emergent subjective prior, we capture the essence of both strategic interaction and robust decision theory: players optimize their expected utility with respect to their own beliefs and simultaneously choose beliefs that hedge against regret in the face of global uncertainty. The existence of an extended equilibrium was established through a fixed-point argument that mirrors Nash's classical approach, demonstrating that, even without a shared prior, players can reach a coherent strategy-belief profile.

An important result is the ``No Fictional Faith'' theorem, which ensures that no player assigns full confidence to a single parameter value if that parameter matters to her potential utilities, thus providing a natural safeguard against completely dogmatic beliefs. Hopefully, this construction ``may appeal to some'' and may serve as a bridge between game-theoretic analyses and robust statistical decision approaches, helping decision-makers, scientists and agents of any sort striving to navigate real-world problems with deeply uncertain parameters.

\section{Acknowledgments}

I am deeply and sincerely grateful for all the support I got from my peers, friends and loved ones. Without the comments and feedback of Csongor Csehi, Dániel Soltész and Wicher Auguste Malten, this work would be more fragmented, partial and obscure.

Above all, I am deeply grateful to Anita L. Verő, who helped and supported me and was my trusted reader and my partner also in discussing the material.

\newpage

\bibliographystyle{plaindin}

\bibliography{ref}





\end{document}